\pdfoutput=1
\documentclass[journal,10pt]{IEEEtran}
\usepackage[letterpaper, left=0.625in, right=0.625in, top=0.75in, bottom=1.0in]{geometry}

\usepackage{booktabs}
\usepackage{stix}

\usepackage{amsfonts}
\usepackage{xcolor,soul,framed} 

\colorlet{shadecolor}{yellow}
\usepackage[pdftex]{graphicx}
\graphicspath{{../pdf/}{../jpeg/}}
\DeclareGraphicsExtensions{.pdf,.png,.jpg,.jpeg}

\usepackage[cmex10]{amsmath}

\usepackage{array}
\usepackage{mdwmath}
\usepackage{mdwtab}
\usepackage{eqparbox}
\usepackage{url}
\usepackage{blkarray}

\usepackage{amsmath, amsfonts, amssymb}
\usepackage{mathtools}

\usepackage{algorithm}
\usepackage{algpseudocode}
\usepackage{float} 
\usepackage{subcaption}
\usepackage[T1]{fontenc}  

\usepackage{etoolbox}
\makeatletter
\patchcmd{\ALG@doentity}{\noindent}{}{}{}
\makeatother
\usepackage{mathrsfs}

\usepackage{tikz}
\usetikzlibrary{arrows.meta,shapes,positioning}

\usepackage[dvipsnames]{xcolor}
\usepackage[
  colorlinks=false,            
  citebordercolor=ForestGreen, 
  linkbordercolor=red,       
  urlbordercolor=white,        
  pdfborderstyle={/S/S/W 1},   
  bookmarks=false
]{hyperref}

\usepackage{booktabs}
\usepackage{tabularx}
\usepackage{amsmath}

\usepackage{graphicx}
\usepackage{subcaption} 

\newtheorem{definition}{\textbf{Assumption}}
\newtheorem{proposition}{\textbf{Proposition}}
\newtheorem{corollary}{\textbf{Corollary}}

\algnewcommand\Output{\textbf{Output: }}

\begin{document}
 \setlength{\columnsep}{0.21in}
 \bstctlcite{IEEEexample:BSTcontrol}
     \title{Adaptive Client Clustering and Coordination for Federated Learning Workflow Management in Edge Networks}
%


\author{\IEEEauthorblockN{Jieping Luo, Qiyue Li,~\IEEEmembership{Student~Member,~IEEE}, Yuxuan Chen, Hang Qi,~\IEEEmembership{Student~Member,~IEEE}, Jiaying Yin, Jingjin Wu, \emph{Senior~Member, IEEE}, and Qian~Wang,~\IEEEmembership{Senior~Member,~IEEE}}

\thanks{
J. Luo is with the Department of Statistics, University of Oxford, 24–29 St Giles’, Oxford OX1 3LB, U.K. Email: \href{mailto:jieping.luo@reuben.ox.ac.uk}{jieping.luo@reuben.ox.ac.uk}. 

Q. Li, Y. Chen, H. Qi, and J. Wu are with the Guangdong Provincial/Zhuhai Key Laboratory of IRADS, Beijing Normal-Hong Kong Baptist University, Zhuhai, China. Emails: 
\href{mailto:t330005026@mail.bnbu.edu.cn}{t330005026@mail.bnbu.edu.cn}, 
\href{mailto:t330018009@mail.bnbu.edu.cn}{t330018009@mail.bnbu.edu.cn}, 
\href{mailto:t330202706@mail.bnbu.edu.cn}{t330202706@mail.bnbu.edu.cn}, 
\href{mailto:jj.wu@ieee.org}{jj.wu@ieee.org}. 

J. Yin is with the Institute of Precision Medicine, The First Affiliated Hospital, Sun Yat-Sen University, Guangzhou, Guangdong, 510080, P.R. China. Email:\href{mailto:yinjy35@mail.sysu.edu.cn}{yinjy35@mail.sysu.edu.cn}. 

Q. Wang is with the Institute of Cyberspace Security, Zhejiang University of Technology, Hangzhou 310023, China. Email:\href{mailto:wangqian18@zjut.edu.cn}{wangqian18@zjut.edu.cn}.

\textit{Corresponding author: J. Wu.}
}}



\maketitle

%

\begin{abstract}
Federated learning (FL) is increasingly deployed as a managed learning service rather than as a set of isolated training jobs. In networked edge environments, dependent FL service flows must coordinate heterogeneous clients, non-IID data, fluctuating communication latency, and precedence-constrained tasks under service-level completion requirements. These coupled factors make participant management central to both time-to-target performance and learning stability. This paper proposes A-CoDa, an adaptive clustered coordination framework for managing dependent FL flows. A-CoDa first uses label-distribution divergence (LDD)-based greedy-balanced clustering to construct statistically coherent and size-aware client groups, which serve as a scalable management abstraction. Building on this structure, we design FedMIX, an uncertainty-aware intra-/inter-cluster participation mechanism that ranks clients by a loss--latency--uncertainty utility and adaptively controls cross-cluster probing according to training progress and latency conditions. A dependency-aware DAG scheduler then orchestrates layer-wise task execution so that parallelism and precedence constraints are jointly respected. We further provide a convergence analysis that frames the result as a sufficient loss-domain design bound, explicitly relating the attainable error floor and sufficient communication rounds to LDD-induced sampling mismatch, residual distribution shift, local-SGD drift, stochastic variance, and adaptive probing budgets. Experiments on handwriting, wearable-sensing, product-image, and medical-imaging tasks evaluate A-CoDa under dependent FL workflows and demonstrate its effectiveness in reducing end-to-end completion time while maintaining competitive accuracy.
\end{abstract}

\begin{IEEEkeywords}
Federated learning, network and service management, dependent learning flows, client participation management, task scheduling, label-distribution divergence, edge intelligence
\end{IEEEkeywords}

%
\IEEEpeerreviewmaketitle


\section{Introduction}
Federated learning (FL)~\cite{yang2019federated} enables distributed clients to collaboratively train models while keeping raw data local. Beyond single-model training, FL is increasingly used as part of managed networked services in which multiple learning tasks must be executed with service-level requirements. Examples include privacy-preserving demand forecasting for supply-chain and retail operations~\cite{qi2025comparative}, healthcare and medical-assistance workflows~\cite{datta2024blockchain}, mobile and wearable sensing analytics~\cite{11223124}, product-inspection services~\cite{10206024}, and networked edge intelligence for satellite or remote-connectivity scenarios~\cite{qi2025energy}. In these applications, the learning service is not only judged by final model accuracy, but also by whether all dependent tasks can finish within a reasonable wall-clock time while respecting privacy, communication, and resource constraints.

Managing such FL service flows is challenging because statistical and system heterogeneity interact. Clients may hold different label distributions, collect task-specific data at different rates, and experience different communication and computation delays. A client with rare or difficult samples may be statistically valuable but slow, while a fast client may contribute redundant information. Therefore, client participation should be treated as an online management decision that jointly considers learning utility, latency cost, and exploration of under-sampled participants, instead of simply selecting the largest or fastest set of clients.

Dependent FL workflows further complicate this decision. In a single FL job, a slow round only delays that job. In a directed acyclic graph (DAG) of FL tasks, however, the same delay may postpone downstream services on the critical execution path. For example, a downstream sensing, inspection, or diagnosis model may not start until an upstream representation, filtering, or pre-processing model has met its target. Static client selection or one-shot task-to-client matching is therefore inefficient when network states, local losses, and client availability evolve over time~\cite{8761315}. Client clustering can reduce the search complexity, but a purely static cluster assignment cannot react to changing training progress or cross-cluster sampling bias.

We propose A-CoDa (Adaptive Cluster-oriented and Dependency-aware Hierarchical Client Selection for Federated Learning), a framework for managing dependent FL service flows under heterogeneous edge resources. A-CoDa combines LDD-based balanced clustering, uncertainty-aware intra-/inter-cluster participation control, and DAG-aware scheduling. The LDD-based clustering stage constructs statistically coherent and size-balanced client groups that serve as a stable low-complexity management abstraction. FedMIX then performs in-cluster exploitation while adaptively probing external clusters when additional diversity is useful, ranking candidates with a loss--latency--uncertainty utility and controlling the external probing budget according to training progress and latency conditions. The DAG scheduler coordinates layer-wise execution so that task precedence and available parallelism are both respected.

Existing FL client-selection schemes are mainly static, dynamic, or cluster-based. Static methods reduce overhead but ignore time-varying system states~\cite{cho2022towards}; dynamic methods exploit latency, computation, or loss feedback~\cite{9443523,9846900,lai2021oort} but usually focus on a single job; and cluster-based methods exploit statistical or resource similarity~\cite{wang2025fedccs,10074237,ren2025dynamic}, but are often static or require repeated re-clustering. Multi-job FL scheduling has been explored~\cite{zhou2022efficient,9685698}, but most existing formulations assume independent jobs rather than precedence-constrained service flows. These limitations motivate a design that keeps the efficiency of clustered coordination while allowing round-level participation management under dependent tasks.

Another limitation is that many selection rules optimize short-term utility, such as expected loss decrease or deadline feasibility, without explicitly tracking how repeated biased participation affects the aggregate update over multiple rounds. In heterogeneous learning services, this can cause a cluster to converge quickly on its dominant distribution while under-representing useful external data. A-CoDa addresses this by preserving an in-cluster backbone for efficiency and adding a bounded, adaptive external probing channel for controlled correction. The uncertainty term in FedMIX discourages stale participation estimates, while the adaptive budget reduces unnecessary probing under latency congestion.

This work extends our prior conference paper~\cite{luo2025cluster}, where client coordination remained largely static after initialization. The inherited components include the basic LDD clustering idea and the PPO-based dependency scheduler. The journal extension focuses on online flow management: it introduces the revised FedMIX participation rule with loss--latency--uncertainty scoring, adaptive inter-cluster probing budgets, a refined convergence analysis that treats the result as a sufficient design bound rather than a direct predictor of empirical accuracy thresholds, and a broader evaluation of dependent FL workflows.

Our main contributions are summarized as follows.
\begin{itemize}
    \item We formulate dependent FL as a managed service-flow coordination problem, where client participation, task dependencies, and latency jointly determine the end-to-end completion time. A-CoDa integrates LDD-based balanced clustering with DAG-aware task execution to provide a scalable coordination architecture.
    \item We design FedMIX, an uncertainty-aware intra-/inter-cluster participation mechanism. FedMIX ranks candidates using a loss--latency--uncertainty utility and adaptively controls cross-cluster probing according to training stagnation, external uncertainty, and latency congestion.
    \item We provide a sufficient convergence and completion-time analysis under PL-type objectives. The analysis explicitly separates sampling mismatch, residual distribution shift, local-SGD drift, stochastic variance, and adaptive probing budgets, clarifying how FedMIX controls the bias--variance--latency trade-off without claiming to exactly predict measured accuracy thresholds.
    \item We evaluate A-CoDa on a dependent multi-task FL suite involving handwriting, wearable sensing, product-image classification, and medical imaging, together with scalability tests up to 500 clients. The results show improved end-to-end completion time while maintaining competitive accuracy.
\end{itemize}

The rest of this paper is organized as follows. Section~II summarizes the related work. Section~III describes the system model. Section~IV provides the convergence and completion-time analysis. Section~V presents the proposed coordination algorithms. Section~VI reports the numerical results. Section~VII concludes the paper.

\section{Related Work}

\subsection{Network and Service Management for Edge Intelligence}
Network and service management traditionally emphasizes the coordinated management of resources, services, policies, reliability, and performance across networked systems. In modern edge-intelligence settings, learning tasks themselves become managed services: they consume communication and computation resources, interact with privacy constraints, and must satisfy time-to-target or service-completion requirements. This perspective is closely related to task-oriented communication and edge-intelligence orchestration, where sensing, communication, computation, and learning objectives are jointly considered~\cite{10195234,9498853,10713971}. Unlike conventional resource-allocation studies that optimize a single offloading or communication objective, A-CoDa treats FL participation as a service-flow management action: the selected clients affect both learning progress and the completion time of downstream tasks.

\subsection{Distribution-Divergence-Based Clustering in Federated Learning}
Distributional discrepancy measures are widely used to quantify non-IID data heterogeneity in FL~\cite{10468591}. Optimal-transport and Wasserstein-style metrics capture structural differences when a ground metric is available~\cite{rubner2000earth,zhang2022deepemd,shen2018wasserstein}, while simpler histogram divergences are often sufficient for label-skewed FL. Prior works used such discrepancies for personalization, divergence-aware aggregation, and clustering~\cite{chen2022emd,9797945,alekseenko2024distance,luo2023influence,10018536}. Following our implementation, this paper uses an $\ell_1$ label-distribution divergence (LDD), rather than solving an optimal-transport problem. This choice keeps the clustering metric lightweight and aligned with empirical label histograms that can be obtained without collecting raw records.

The lightweight nature of LDD is important for managed FL flows. A service manager can often obtain class-count summaries or coarse distribution statistics, whereas richer feature-distribution estimates may require additional inference, communication, or privacy-sensitive metadata. LDD is therefore used as a low-overhead proxy for dominant label-skew heterogeneity, not as a universal substitute for feature-level distribution shift.

\begin{figure}[t]
    \centering
    \includegraphics[width=0.94\linewidth]{fig1final_01.png}
    \caption{Adaptive cluster-based FL client selection in networked edge systems.}
    \label{fig:1}
\end{figure}

\subsection{Dependent Task Scheduling and Multi-Job FL}
Dependency-constrained edge workloads couple offloading, scheduling, latency, energy, and reliability~\cite{liu2023dependent}. Because precedence constraints make the resulting optimization difficult~\cite{zhao2021ServiceCaching}, learning-based controllers such as federated deep Q-networks, Lyapunov methods, and bandit methods have been explored~\cite{10713971,dai2023offloading}. Multi-job FL scheduling further shows that participant assignment should be treated as a system-level scheduling problem rather than a purely statistical choice~\cite{zhou2022efficient,9685698}. However, many multi-job FL studies assume independent jobs or parallel training without explicit service-flow dependencies. A-CoDa differs by considering a DAG of dependent FL tasks and measuring completion time through layer-wise critical execution, so delays in upstream tasks can propagate to downstream tasks.

\subsection{Client Participation and Cluster-Based FL}
Client selection has been studied from several complementary perspectives. Deadline-aware selection, tier-based grouping, and utility--speed scoring reduce stragglers and improve time-to-accuracy~\cite{8761315,chai2020tifl,lai2021oort}. Recent dynamic selection methods also use loss, model updates, or distribution metrics to prioritize statistically useful clients~\cite{li2024adafl,jimenez2025psi,ren2025dynamic}. Clustered FL manages statistical and resource heterogeneity by grouping clients with similar data or performance profiles~\cite{wang2025fedccs,10074237,10197242}. These methods demonstrate the value of participation control, but they typically optimize a single FL job, an independent multi-job setting, or a static clustered structure.

A-CoDa treats the cluster structure as a stable low-complexity backbone and uses FedMIX for round-level flexibility. FedMIX extends loss--latency selection with an uncertainty bonus and an adaptive probing budget, making external exploration a controlled management resource rather than a fixed overhead. This is particularly relevant for dependent FL flows, where over-exploration may increase tail latency while under-exploration may increase sampling bias.

\subsection{Summary}
Prior studies have advanced non-IID modeling, edge scheduling, multi-job FL, and clustered participation control separately. A-CoDa integrates these directions by combining LDD-driven balanced clustering, FedMIX-based adaptive participation management, and DAG-aware execution. Its analysis links LDD-induced sampling mismatch, residual distribution shift, local-SGD drift, stochastic variance, and adaptive probing budgets to sufficient round and completion-time bounds, while the algorithmic design connects this guidance to practical loss--latency--uncertainty selection.

\section{System Model}
\label{sec:system_model}

\subsection{Multi-Task Federated Learning over IoT-Edge Systems}
\label{subsec:mtfl_system}

We consider a multi-task federated learning (FL) system deployed over an IoT-edge infrastructure, consisting of one edge server and $U$ client devices indexed by
$\mathcal U=\{1,\ldots,U\}$. Client $u\in\mathcal U$ holds a local dataset $D_u$ with size $|D_u|$. A set of learning tasks is denoted by
$\mathcal V=\{1,\ldots,V\}$. For task $v\in\mathcal V$, let $D_{v,u}\subseteq D_u$ be the subset of client $u$'s data associated with task $v$. We define
\begin{equation}
    n_{v,u}\triangleq |D_{v,u}|,\qquad
    n_v\triangleq \sum_{u\in\mathcal U} n_{v,u},
\end{equation}
and the task-eligible client pool as
\begin{equation}
    \mathcal U_v=\{u\in\mathcal U:n_{v,u}>0\}.
\end{equation}
The normalized data weight of client $u$ for task $v$ is
\begin{equation}
    p_{v,u}=\frac{n_{v,u}}{n_v},\qquad u\in\mathcal U_v.
\end{equation}

The tasks are organized as a directed acyclic graph (DAG)
\begin{equation}
    \mathcal G=(\mathcal V,\mathcal E),
\end{equation}
where each node $v\in\mathcal V$ represents a learning task, and each directed edge $(v,q)\in\mathcal E$ indicates that task $v$ must be completed before task $q$ starts. To avoid notation conflict with the number of local SGD steps, we use $\mathcal E$ to denote the edge set of the DAG. The DAG is partitioned into $L$ execution layers, denoted by
\begin{equation}
    \mathcal S=\{\mathcal S_1,\ldots,\mathcal S_L\},
\end{equation}
where tasks in the same layer can be executed in parallel, while different layers are executed sequentially. The execution proceeds from layer $1$ to layer $L$.

All clients are partitioned into $N$ disjoint clusters
\begin{equation}
    \mathcal C=\{\mathcal C_1,\ldots,\mathcal C_N\},
\end{equation}
where $\mathcal C_i\subseteq\mathcal U$, $\mathcal C_i\cap\mathcal C_j=\emptyset$ for $i\neq j$, and $\bigcup_{i=1}^{N}\mathcal C_i=\mathcal U$. Let $i(u)$ denote the cluster index of client $u$, i.e., $u\in\mathcal C_{i(u)}$. For each task $v$, the cluster assigned as its primary serving cluster is denoted by
\begin{equation}
    \phi(v)\in\{1,\ldots,N\}.
\end{equation}
For tasks executed in the same DAG layer, the cluster-task assignment satisfies
\begin{equation}
    \phi(v)\neq \phi(q),\qquad \forall v\neq q,\; v,q\in\mathcal S_l,\; l=1,\ldots,L,
    \label{eq:cluster_task_injective}
\end{equation}
which means that one cluster serves at most one task at the same time.

For task $v$, the in-cluster eligible client pool is defined as
\begin{equation}
    \mathcal M_v
    =\mathcal U_v\cap \mathcal C_{\phi(v)},
    \qquad
    M_v=|\mathcal M_v|.
    \label{eq:in_cluster_pool}
\end{equation}
The out-of-cluster eligible client pool is defined as
\begin{equation}
    \mathcal O_v
    =\mathcal U_v\setminus \mathcal C_{\phi(v)}.
    \label{eq:out_cluster_pool}
\end{equation}

For task $v$, the global loss function is defined as the weighted sum of local losses over its task-eligible client pool:
\begin{equation}
    F_v(w)=\sum_{u\in\mathcal U_v}p_{v,u}F_{v,u}(w),
    \label{eq:global_obj}
\end{equation}
where $F_{v,u}(w)$ is the empirical loss of model parameter $w$ on client $u$ for task $v$. The corresponding global minimizer and local minimizer are denoted by
\begin{equation}
    w_v^*=\arg\min_w F_v(w),
    \qquad
    w_{v,u}^*=\arg\min_w F_{v,u}(w).
\end{equation}

Each task $v$ has an initial global model $w_v^{(0)}$ and a task-specific target requirement $\tau_v$. Let $R_v$ denote the number of communication rounds required for task $v$ to meet its target requirement. At communication round $r=1,\ldots,R_v$, the edge server broadcasts $w_v^{(r-1)}$ to the selected active client set $\mathcal A_v^{(r)}$. Each active client $u\in\mathcal A_v^{(r)}$ initializes
\begin{equation}
    w_{v,u}^{(r,0)}=w_v^{(r-1)}
\end{equation}
and performs $E_{\mathrm{loc}}$ local SGD steps:
\begin{equation}
    w_{v,u}^{(r,e+1)}
    =
    w_{v,u}^{(r,e)}
    -
    \eta_v^{(r)}
    g_{v,u}^{(r,e)},
    \qquad e=0,\ldots,E_{\mathrm{loc}}-1,
    \label{eq:local_sgd}
\end{equation}
where $\eta_v^{(r)}$ is the learning rate and $g_{v,u}^{(r,e)}$ is a stochastic gradient of $F_{v,u}$.

After local training, the server aggregates the uploaded local models using data-size-aware weights:
\begin{equation}
    w_v^{(r)}
    =
    \sum_{u\in\mathcal A_v^{(r)}}
    \alpha_{v,u}^{(r)}
    w_{v,u}^{(r,E_{\mathrm{loc}})},
    \label{eq:fedavg_update}
\end{equation}
where
\begin{equation}
    \alpha_{v,u}^{(r)}
    =
    \frac{n_{v,u}}
    {\sum_{j\in\mathcal A_v^{(r)}}n_{v,j}},
    \qquad
    u\in\mathcal A_v^{(r)}.
    \label{eq:active_agg_weight}
\end{equation}

To use a unified notation for both theoretical analysis and numerical evaluation, we define the task progress metric $Q_v^{(r)}$. In the loss-domain analysis, it is given by the relative loss reduction
\begin{equation}
    Q_v^{(r)}
    =
    1-\frac{F_v(w_v^{(r)})}{F_v(w_v^{(0)})}.
    \label{eq:loss_reduction_metric}
\end{equation}
In the experiments, $Q_v^{(r)}$ is instantiated as the test classification accuracy of task $v$ after round $r$. Task $v$ is regarded as completed when
\begin{equation}
    Q_v^{(R_v)}\ge \tau_v.
    \label{eq:task_completion}
\end{equation}

\subsection{Data Heterogeneity}
\label{subsec:heterogeneity}

In realistic IoT-edge FL systems, client data are generally non-IID. We characterize task-wise data heterogeneity using label-distribution divergence (LDD), defined as the $\ell_1$ distance between empirical label distributions. Let $\mathcal Y=\{1,\ldots,Y\}$ denote the label space. For task $v$, client $u$ induces an empirical label distribution $P_u^{(v)}(y)$ over $\mathcal Y$.

The global label distribution of task $v$ is
\begin{equation}
    P_g^{(v)}(y)
    =
    \sum_{u\in\mathcal U_v}
    p_{v,u}P_u^{(v)}(y),
    \qquad y\in\mathcal Y.
    \label{eq:global_label_dist}
\end{equation}
The client-level LDD of client $u$ for task $v$ is
\begin{equation}
    \Delta_u^{(v)}
    =
    \left\|P_u^{(v)}-P_g^{(v)}\right\|_1
    =
    \sum_{y=1}^{Y}
    \left|
    P_u^{(v)}(y)-P_g^{(v)}(y)
    \right|.
    \label{eq:client_ldd}
\end{equation}

For cluster $\mathcal C_i$, define the number of task-$v$ samples in the cluster as
\begin{equation}
    n_{v,\mathcal C_i}
    =
    \sum_{u\in\mathcal C_i}n_{v,u}.
\end{equation}
When $n_{v,\mathcal C_i}>0$, the aggregated label distribution of cluster $\mathcal C_i$ for task $v$ is
\begin{equation}
    P_i^{(v)}(y)
    =
    \frac{1}{n_{v,\mathcal C_i}}
    \sum_{u\in\mathcal C_i}
    n_{v,u}P_u^{(v)}(y),
    \qquad y\in\mathcal Y.
    \label{eq:cluster_label_dist}
\end{equation}
The cluster-level LDD is then defined as
\begin{equation}
    \Delta_{\mathcal C_i}^{(v)}
    =
    \left\|P_i^{(v)}-P_g^{(v)}\right\|_1
    =
    \sum_{y=1}^{Y}
    \left|
    P_i^{(v)}(y)-P_g^{(v)}(y)
    \right|.
    \label{eq:cluster_ldd}
\end{equation}

For client $u$, the multi-task LDD aggregates task-wise discrepancies according to the client's task data shares:
\begin{equation}
    \Delta_u^{\mathrm{mt}}
    =
    \sum_{v\in\mathcal V}
    \frac{n_{v,u}}{n_u}
    \left\|P_u^{(v)}-P_g^{(v)}\right\|_1,
    \qquad
    n_u=\sum_{v\in\mathcal V}n_{v,u}.
    \label{eq:client_mt_ldd}
\end{equation}
Similarly, the multi-task LDD of cluster $\mathcal C_i$ is
\begin{equation}
    \Delta_{\mathcal C_i}^{\mathrm{mt}}
    =
    \sum_{v\in\mathcal V}
    \frac{n_{v,\mathcal C_i}}{n_{\mathcal C_i}}
    \left\|P_i^{(v)}-P_g^{(v)}\right\|_1,
    \qquad
    n_{\mathcal C_i}=\sum_{v\in\mathcal V}n_{v,\mathcal C_i}.
    \label{eq:cluster_mt_ldd}
\end{equation}
The overall cluster-level multi-task LDD is
\begin{equation}
    \bar{\Delta}_{\mathcal C}^{\mathrm{mt}}
    =
    \sum_{i=1}^{N}
    \frac{n_{\mathcal C_i}}
    {\sum_{j=1}^{N}n_{\mathcal C_j}}
    \Delta_{\mathcal C_i}^{\mathrm{mt}}.
    \label{eq:overall_mt_ldd}
\end{equation}

Since the active client set may include both in-cluster and out-of-cluster clients, we further define the round-level active-set LDD. Let
\begin{equation}
    n_{v,\mathcal C_i}^{(r)}
    =
    \sum_{u\in\mathcal A_v^{(r)}\cap\mathcal C_i}n_{v,u},
    \qquad
    n_v^{(r)}
    =
    \sum_{u\in\mathcal A_v^{(r)}}n_{v,u}.
\end{equation}
The active-set LDD of task $v$ at round $r$ is
\begin{equation}
    \bar{\Delta}_v^{(r)}
    =
    \sum_{i=1}^{N}
    \frac{n_{v,\mathcal C_i}^{(r)}}{n_v^{(r)}}
    \Delta_{\mathcal C_i}^{(v)}.
    \label{eq:round_active_ldd}
\end{equation}
This quantity captures the heterogeneity level induced by the actual client participation decision at round $r$.

\subsection{Transmission and Computation Time}
\label{subsec:latency}

In each FL round, the latency incurred by client $u$ for task $v$ consists of local computation time and uplink communication time. Let $\kappa_v$ denote the computational complexity coefficient of task $v$, and let $f_u$ denote the effective computing rate of client $u$. The local computation time of client $u$ in round $r$ is
\begin{equation}
    t_{u,v}^{(r),\mathrm{comp}}
    =
    \frac{E_{\mathrm{loc}}\,n_{v,u}\,\kappa_v}{f_u},
    \label{eq:comp_time}
\end{equation}
where $E_{\mathrm{loc}}$ is the number of local SGD steps. The effective computing rate is defined as
\begin{equation}
    f_u=\frac{f_u^{\mathrm{clock}}}{C_u},
\end{equation}
where $f_u^{\mathrm{clock}}$ is the CPU clock frequency and $C_u$ is the number of CPU cycles required to process one unit of data.

To avoid conflict with the external probing budget, we use $W$ to denote the wireless channel bandwidth. According to Shannon's capacity formula, the uplink transmission rate of client $u$ in round $r$ is
\begin{equation}
    R_u^{(r)}
    =
    W\log_2
    \left(
    1+\frac{p_u h_u^{(r)}}{\sigma^2}
    \right),
    \label{eq:uplink_rate}
\end{equation}
where $p_u$ is the transmission power of client $u$, $h_u^{(r)}$ is the instantaneous uplink channel gain, and $\sigma^2$ is the receiver noise power.

Let $S_v$ be the model size of task $v$ in bits. The uplink communication time is
\begin{equation}
    t_{u,v}^{(r),\mathrm{comm}}
    =
    \frac{S_v}{R_u^{(r)}}.
    \label{eq:comm_time}
\end{equation}
Therefore, the total latency of client $u$ in round $r$ is
\begin{equation}
    t_{u,v}^{(r)}
    =
    t_{u,v}^{(r),\mathrm{comp}}
    +
    t_{u,v}^{(r),\mathrm{comm}}
    =
    \frac{E_{\mathrm{loc}}\,n_{v,u}\,\kappa_v}{f_u}
    +
    \frac{S_v}{R_u^{(r)}}.
    \label{eq:client_latency}
\end{equation}

Since synchronous FL aggregation waits for all selected clients, the round latency of task $v$ is determined by the slowest active client:
\begin{equation}
    T_v^{(r)}
    =
    \max_{u\in\mathcal A_v^{(r)}}t_{u,v}^{(r)}.
    \label{eq:round_latency}
\end{equation}
The total training time of task $v$ is
\begin{equation}
    T_v
    =
    \sum_{r=1}^{R_v}T_v^{(r)}.
    \label{eq:task_total_time}
\end{equation}

\subsection{Dynamic Clustering and FedMIX-Based Client Selection}
\label{subsec:dynamic_selection}

The proposed scheduling framework uses dynamic client participation to balance learning progress, system latency, exploration, and cross-cluster diversity. For each task $v$, the active client set at round $r$ consists of two parts:
\begin{equation}
    \mathcal A_v^{(r)}
    =
    \mathcal I_v^{(r)}
    \cup
    \mathcal R_v^{(r)},
    \label{eq:active_set}
\end{equation}
where $\mathcal I_v^{(r)}$ is the intra-cluster selected set from the primary cluster $\mathcal C_{\phi(v)}$, and $\mathcal R_v^{(r)}$ is the inter-cluster recruited set from other clusters.

\subsubsection{Intra-Cluster Candidate Selection}

For task $v$, the intra-cluster candidate pool is $\mathcal M_v$ defined in~\eqref{eq:in_cluster_pool}. Given the intra-cluster participation ratio $\rho\in(0,1]$ and the per-round participation capacity $K_{\mathrm{cap}}$, the number of intra-cluster participants is
\begin{equation}
    K_{v,\mathrm{in}}
    =
    \min\left\{
    \left\lceil \rho M_v \right\rceil,
    K_{\mathrm{cap}}
    \right\}.
    \label{eq:intra_num}
\end{equation}
Thus,
\begin{equation}
    \mathcal I_v^{(r)}\subseteq \mathcal M_v,
    \qquad
    |\mathcal I_v^{(r)}|=K_{v,\mathrm{in}}.
    \label{eq:intra_constraint}
\end{equation}

At the beginning of round $r$, the server maintains historical estimates for each candidate client, including an exponential moving average (EMA) of its observed loss and latency. Let $\hat{\ell}_{v,u}^{(r)}$ and $\hat{t}_{v,u}^{(r)}$ denote the EMA loss and EMA latency of client $u$ for task $v$ before round $r$. After client $u$ participates in round $r$, the server observes its local loss $\ell_{v,u}^{(r)}$ and latency $t_{u,v}^{(r)}$, and updates
\begin{equation}
    \hat{x}_{v,u}^{(r+1)}
    =
    \begin{cases}
    (1-\beta_x)\hat{x}_{v,u}^{(r)}+\beta_x x_{v,u}^{(r)},
    & u\in\mathcal A_v^{(r)},\\
    \hat{x}_{v,u}^{(r)},
    & u\notin\mathcal A_v^{(r)},
    \end{cases}
    \qquad
    x\in\{\ell,t\},
    \label{eq:ema_update}
\end{equation}
where $\beta_x\in(0,1]$ is the EMA smoothing factor.

Let $m_{v,u}^{(r)}$ be the number of times client $u$ has participated in task $v$ before round $r$:
\begin{equation}
    m_{v,u}^{(r)}
    =
    \sum_{s=1}^{r-1}
    \mathbf{1}\{u\in\mathcal A_v^{(s)}\}.
    \label{eq:participation_count}
\end{equation}
The uncertainty bonus used for exploration is defined as
\begin{equation}
    b_{v,u}^{(r)}
    =
    \sqrt{
    \frac{\log(r+1)}
    {m_{v,u}^{(r)}+1}
    }.
    \label{eq:ucb_bonus}
\end{equation}

Let $\tilde{\ell}_{v,u}^{(r)}$, $\tilde{t}_{v,u}^{(r)}$, and $\tilde{b}_{v,u}^{(r)}$ denote the normalized EMA loss, normalized EMA latency, and normalized uncertainty bonus, respectively. All three are normalized to $[0,1]$ within the corresponding candidate pool. The FedMIX utility score of client $u$ for task $v$ at round $r$ is
\begin{equation}
    U_{v,u}^{(r)}
    =
    \tilde{\ell}_{v,u}^{(r)}
    -
    \lambda_T\tilde{t}_{v,u}^{(r)}
    +
    \lambda_U\tilde{b}_{v,u}^{(r)},
    \label{eq:fedmix_utility}
\end{equation}
where $\lambda_T\ge0$ controls the latency penalty and $\lambda_U\ge0$ controls the exploration strength. A larger $U_{v,u}^{(r)}$ indicates that client $u$ is more preferred because it has higher potential learning contribution, lower latency cost, or larger uncertainty.

The intra-cluster selected set $\mathcal I_v^{(r)}$ is chosen as the top-$K_{v,\mathrm{in}}$ clients in $\mathcal M_v$ according to $U_{v,u}^{(r)}$:
\begin{equation}
    \mathcal I_v^{(r)}
    =
    \operatorname{TopK}_{K_{v,\mathrm{in}}}
    \left(
    \left\{
    U_{v,u}^{(r)}:u\in\mathcal M_v
    \right\}
    \right).
    \label{eq:intra_topk}
\end{equation}

\subsubsection{Adaptive Inter-Cluster Recruitment}

To mitigate residual imbalance within the primary cluster and to improve exploration across clusters, the server may recruit a limited number of task-eligible clients from other clusters. The external candidate pool is $\mathcal O_v$ defined in~\eqref{eq:out_cluster_pool}. The number of external clients is controlled by an adaptive probing budget $B_v^{(r)}$.

Let the remaining task gap before round $r$ be
\begin{equation}
    g_v^{(r)}
    =
    \left[\tau_v-Q_v^{(r-1)}\right]_+,
    \label{eq:task_gap}
\end{equation}
where $[x]_+=\max\{x,0\}$. Let $H_p$ be the progress-checking window. The recent progress of task $v$ is
\begin{equation}
    d_v^{(r)}
    =
    Q_v^{(r-1)}
    -
    Q_v^{(\max\{0,r-H_p\})}.
    \label{eq:recent_progress}
\end{equation}
Given a stagnation threshold $\delta_v\ge0$, define $I_{\mathrm{stag}}^{(r)}=\mathbf{1}\{d_v^{(r)}<\delta_v\}$. Let $I_{\mathrm{unc}}^{(r)}$ indicate high external-cluster uncertainty and let $I_{\mathrm{cong}}^{(r)}$ indicate latency congestion. The probing demand is summarized as
\begin{equation}
    \xi_v^{(r)}
    =
    I_{\mathrm{stag}}^{(r)}
    \min\left\{1,\frac{g_v^{(r)}}{\tau_v}\right\},
    \label{eq:probing_demand}
\end{equation}
which is used together with uncertainty and congestion signals to adapt the external probing budget:
\begin{equation}
    B_v^{(r)}
    =
    \operatorname{clip}\left(
    B_0+I_{\mathrm{stag}}^{(r)}+I_{\mathrm{unc}}^{(r)}-I_{\mathrm{cong}}^{(r)},
    B_{\min},B_{\max}
    \right),
    \label{eq:adaptive_budget}
\end{equation}
where $B_{\min}$ and $B_{\max}$ are the minimum and maximum external probing budgets. This notation distinguishes the probing budget $B_v^{(r)}$ from the wireless bandwidth $W$ in~\eqref{eq:uplink_rate}.

The recruited set satisfies
\begin{equation}
    \mathcal R_v^{(r)}
    \subseteq
    \mathcal O_v,
    \qquad
    |\mathcal R_v^{(r)}|
    \le
    B_v^{(r)}.
    \label{eq:external_constraint}
\end{equation}
External clients are selected according to the same FedMIX utility score in~\eqref{eq:fedmix_utility}, subject to the external budget.

To prevent a small group of external clients from being repeatedly associated with the same task, we define the consecutive external-recruitment counter
\begin{equation}
    h_{v,u}^{(r)}
    =
    \begin{cases}
    h_{v,u}^{(r-1)}+1,
    & u\in\mathcal R_v^{(r)},\\
    0,
    & u\notin\mathcal R_v^{(r)}.
    \end{cases}
    \label{eq:external_counter}
\end{equation}
The maximum consecutive recruitment length is constrained by
\begin{equation}
    h_{v,u}^{(r)}\le H,
    \qquad
    \forall u\in\mathcal O_v,
    \label{eq:external_hold_constraint}
\end{equation}
where $H$ is a positive integer.

Combining intra-cluster selection and inter-cluster recruitment, the overall active set satisfies
\begin{equation}
    \mathcal A_v^{(r)}
    =
    \mathcal I_v^{(r)}
    \cup
    \mathcal R_v^{(r)},
    \qquad
    |\mathcal A_v^{(r)}|\le K_{\mathrm{cap}}.
    \label{eq:active_capacity}
\end{equation}

\subsection{Optimization Problem Formulation}
\label{subsec:objective}

The overall objective is to minimize the total wall-clock training time required to complete all tasks in the DAG. Since tasks within the same layer can be executed in parallel, while layers are executed sequentially, the total completion time is the sum of the maximum task completion time in each layer.

The joint scheduling, clustering, and client selection problem can be formulated as
\begin{equation}
\begin{aligned}
    \min_{\phi,\{\mathcal I_v^{(r)},\mathcal R_v^{(r)}\}}
    \quad
    T_{\mathrm{total}}
    &=
    \sum_{l=1}^{L}
    \max_{v\in\mathcal S_l}
    T_v&=
    \sum_{l=1}^{L}
    \max_{v\in\mathcal S_l}
    \sum_{r=1}^{R_v}
    \max_{u\in\mathcal A_v^{(r)}}
    t_{u,v}^{(r)}.
\end{aligned}
\label{eq:total_objective}
\end{equation}
The problem is subject to the following constraints for all $v\in\mathcal V$ and $r=1,\ldots,R_v$:
\begin{subequations}
\label{eq:optimization_constraints}
\begin{align}
    & Q_v^{(R_v)}\ge \tau_v,
    \label{eq:constraint_completion}
    \\
    & \phi(v)\in\{1,\ldots,N\},
    \label{eq:constraint_cluster_assignment}
    \\
    & \phi(v)\neq\phi(q),
    \quad
    \forall v\neq q,\; v,q\in\mathcal S_l,\; l=1,\ldots,L,
    \label{eq:constraint_layer_assignment}
    \\
    & \mathcal I_v^{(r)}\subseteq\mathcal M_v,
    \qquad
    |\mathcal I_v^{(r)}|=K_{v,\mathrm{in}},
    \label{eq:constraint_intra_selection}
    \\
    & \mathcal R_v^{(r)}\subseteq\mathcal O_v,
    \qquad
    |\mathcal R_v^{(r)}|\le B_v^{(r)},
    \label{eq:constraint_external_selection}
    \\
    & \mathcal A_v^{(r)}
    =
    \mathcal I_v^{(r)}
    \cup
    \mathcal R_v^{(r)},
    \label{eq:constraint_active_set}
    \\
    & |\mathcal A_v^{(r)}|\le K_{\mathrm{cap}},
    \label{eq:constraint_capacity}
    \\
    & h_{v,u}^{(r)}\le H,
    \qquad
    \forall u\in\mathcal O_v.
    \label{eq:constraint_external_hold}
\end{align}
\end{subequations}

The above formulation is combinatorial because it jointly involves DAG-aware task scheduling, cluster-task assignment, intra-cluster client selection, and adaptive inter-cluster recruitment. Therefore, the proposed algorithms solve it through a layered procedure: first assigning tasks and clusters according to the DAG structure, then performing FedMIX-based intra-cluster selection, and finally activating adaptive inter-cluster probing when task progress stagnates or the remaining task gap is large.

\section{Convergence Behavior Under Sampling Bias, Variance, and Latency}
\label{sec:convergence-ldd-loss-inter}

This section analyzes how dynamic participation affects the training time of a dependent FL flow. The purpose of the analysis is not to predict the exact accuracy-threshold rounds observed in Sec.~VI. Instead, it provides a sufficient loss-domain design bound that clarifies how sampling mismatch, LDD, residual distribution shift, local-SGD drift, stochastic variance, and adaptive probing budgets enter the round and completion-time behavior. We first analyze one arbitrary task $v$ and then compose the task-wise bounds across the DAG layers.

For notational simplicity, we omit the task index when no confusion arises and write $F\equiv F_v$, $F_u\equiv F_{v,u}$, $\pi_u\equiv p_{v,u}$, $P_u\equiv P_u^{(v)}$, $P\equiv P_g^{(v)}$, and $\Delta_u\equiv\|P_u-P\|_1$. The target weights $\{\pi_u\}_{u\in\mathcal U_v}$ form the data-weighted distribution of task $v$. At round $r$, the server selects an active set $S_r=\mathcal A_v^{(r)}$. To account for data-size-aware aggregation, we define the induced aggregation distribution
\begin{equation}
q_r(u)=\mathbb E\left[\alpha_{v,u}^{(r)}\mathbf 1\{u\in S_r\}\mid w^{(r)}\right],
\label{eq:qr-agg-law}
\end{equation}
where $\alpha_{v,u}^{(r)}$ is given in~\eqref{eq:active_agg_weight}. Thus $q_r(u)\ge0$ and $\sum_{u\in\mathcal U_v}q_r(u)=1$. The selection law $q_r$ is allowed to be time-varying and history-dependent because FedMIX uses EMA loss, EMA latency, uncertainty bonuses, and the adaptive probing budget in~\eqref{eq:fedmix_utility}--\eqref{eq:adaptive_budget}. Let $\mathcal Q$ denote the admissible class of such aggregation distributions satisfying the participation constraints.

\subsection{Assumptions}

\begin{definition}[$L$-smoothness and $\mu$-PL condition]
\label{ass1}
Each local objective $F_u$ is $L$-smooth on an admissible parameter domain $\mathcal W$, and the global objective $F$ satisfies the Polyak--\L{}ojasiewicz (PL) inequality with $\mu>0$:
\begin{equation}
\frac{1}{2}\|\nabla F(w)\|^2\ge \mu\big(F(w)-F(w^*)\big),\qquad w\in\mathcal W .
\label{eq:PL}
\end{equation}
\end{definition}

\begin{definition}[Distribution-gradient regularity with residual shift]
\label{ass2}
For every client $u$ and parameter $w\in\mathcal W$, the client-gradient deviation admits
\begin{equation}
\|\nabla F_u(w)-\nabla F(w)\|
\le C_g\|P_u-P\|_1+\chi_u(w),
\label{eq:dist-lip}
\end{equation}
where $C_g\ge0$ captures the dominant label-marginal effect and $\chi_u(w)\ge0$ is a residual term that accounts for covariate or feature-conditional shift not explained by label histograms. We assume
$\bar\chi\triangleq\sup_{q\in\mathcal Q,w\in\mathcal W}\sum_u |q(u)-\pi_u|\chi_u(w)<\infty$.
\end{definition}

Assumption~\ref{ass2} is deliberately stated as a modeling regularity condition rather than as a claim that LDD fully determines gradient mismatch. Under a pure label-mixture model, the residual term can be small; under covariate shift, $\bar\chi$ records the unmodeled component. This addresses the fact that LDD is a lightweight proxy used by the management layer, not a complete description of all data heterogeneity.

\begin{definition}[Loss-driven stochastic variance]
\label{ass3}
For each client $u$ and each $w\in\mathcal W$, the stochastic gradient noise satisfies
\begin{equation}
\mathbb E\|\nabla f(w;\xi)-\nabla F_u(w)\|^2
\le \alpha\ell_u(w)+\beta,
\label{eq:growth}
\end{equation}
where $\alpha,\beta\ge0$. Moreover, $0\le \ell_u(w)\le \ell_{\max}<\infty$ on $\mathcal W$.
\end{definition}

\begin{definition}[Local-SGD drift and aggregate variance]
\label{ass4}
Selected clients perform $E_{\mathrm{loc}}$ local SGD steps. Let $g_r$ be the aggregate update direction used by the server, and define
\begin{equation}
d^{(r)}\triangleq \mathbb E[g_r\mid w^{(r)}]-\left(\nabla F(w^{(r)})+b^{(r)}\right),
\label{eq:local-drift-def}
\end{equation}
where $b^{(r)}$ is the sampling-bias term defined in~\eqref{eq:bias-def}. We assume $\|\nabla F(w)\|^2\le G^2$ on $\mathcal W$ and
\begin{equation}
\|d^{(r)}\|^2\le \bar D_E,
\qquad
\bar D_E=0\;\text{when}\;E_{\mathrm{loc}}=1 .
\label{eq:local-drift-bound}
\end{equation}
The stochastic component around the conditional mean satisfies
\begin{equation}
\mathbb E\|g_r-\mathbb E[g_r\mid w^{(r)}]\|^2
\le M\|\nabla F(w^{(r)})+b^{(r)}+d^{(r)}\|^2+\bar\Sigma,
\label{eq:var-decomp}
\end{equation}
where $M\ge0$ and $\bar\Sigma\le \alpha\ell_{\max}+\beta$ is a uniform additive variance bound.
\end{definition}

\subsection{Sampling Bias and FedMIX-Induced Mismatch}

The conditional mean of the aggregate update can be decomposed as
\begin{equation}
\mathbb E[g_r\mid w^{(r)}]
=
\nabla F(w^{(r)})+b^{(r)}+d^{(r)},
\end{equation}
where
\begin{equation}
b^{(r)}=
\sum_{u\in\mathcal U_v}\big(q_r(u)-\pi_u\big)\nabla F_u(w^{(r)}).
\label{eq:bias-def}
\end{equation}
Since $\sum_u(q_r(u)-\pi_u)=0$, Assumption~\ref{ass2} gives
\begin{align}
\|b^{(r)}\|
&\le
C_g\sum_u |q_r(u)-\pi_u|\Delta_u
+
\sum_u |q_r(u)-\pi_u|\chi_u(w^{(r)}) 
\notag\\
&\le C_g\bar B+\bar\chi,
\label{eq:bt-square}
\end{align}
where
\begin{equation}
\bar B\triangleq
\sup_{q\in\mathcal Q}\sum_u |q(u)-\pi_u|\Delta_u .
\end{equation}

For FedMIX, $\bar B$ can be related to the intra-cluster ratio and the adaptive probing budget. Let $K_{v,\mathrm{in}}=\lceil \rho M_v\rceil$ and let $\delta_v(\rho)$ denote the worst-case LDD-weighted mismatch of the intra-only selection law. Because FedMIX recruits at most $B_v^{(r)}$ new external clients in round $r$, the external aggregation mass is upper bounded by $B_v^{(r)}/(K_{v,\mathrm{in}}+B_v^{(r)})$. Hence the realized FedMIX mismatch obeys
\begin{equation}
\bar B_{\mathrm{FM}}^{(r)}(\rho,B_v^{(r)})
=
\delta_v(\rho)
+
\frac{2B_v^{(r)}}{K_{v,\mathrm{in}}+B_v^{(r)}}\Delta_{\max,v},
\label{eq:bbar-fedmix}
\end{equation}
where $\Delta_{\max,v}=\max_{u\in\mathcal U_v}\Delta_u$. Since $B_v^{(r)}\le B_{\max}$, a trajectory-independent bound is obtained by replacing $B_v^{(r)}$ with $B_{\max}$. The uncertainty term in FedMIX changes the realized law $q_r$ by encouraging under-sampled clients and clusters; the bound above remains valid because it only requires the resulting $q_r$ to satisfy the participation and probing constraints.

Combining~\eqref{eq:bt-square} with Assumption~\ref{ass4}, the aggregate direction satisfies
\begin{equation}
\|\nabla F(w^{(r)})+b^{(r)}+d^{(r)}\|^2
\le
\bar G_{\mathrm{agg}}^2
\triangleq
3\left(G^2+(C_g\bar B+\bar\chi)^2+\bar D_E\right).
\label{eq:gagg-bound}
\end{equation}
Together with~\eqref{eq:var-decomp}, this yields
\begin{equation}
\sigma_r^2
\triangleq
\mathbb E\|g_r-\mathbb E[g_r\mid w^{(r)}]\|^2
\le
M\bar G_{\mathrm{agg}}^2+\bar\Sigma .
\label{eq:signma}
\end{equation}

\subsection{One-Step Progress and Finite-Time Bound}

By $L$-smoothness of $F$, the server update $w^{(r+1)}=w^{(r)}-\eta g_r$ satisfies
\begin{align}
\mathbb E[F(w^{(r+1)})\mid w^{(r)}]
&\le F(w^{(r)})
-\eta\left\langle \nabla F(w^{(r)}),\mathbb E[g_r\mid w^{(r)}]\right\rangle
\notag\\
&\quad +\frac{L\eta^2}{2}\mathbb E\|g_r\|^2 .
\label{eq:smooth-step}
\end{align}
Using
\begin{equation}
\mathbb E\|g_r\|^2
\le
\|\nabla F(w^{(r)})+b^{(r)}+d^{(r)}\|^2+
\sigma_r^2,
\label{eq:app-second-moment}
\end{equation}
and applying Young's inequality to the cross term yields, for $0<\eta\le 1/(4L)$,
\begin{align}
\mathbb E[F(w^{(r+1)})-F(w^*)]
&\le
\left(1-\frac{\eta\mu}{2}\right)
\mathbb E[F(w^{(r)})-F(w^*)]
\notag\\
&\quad+4\eta\left((C_g\bar B+\bar\chi)^2+\bar D_E\right)
\notag\\
&\quad+\frac{L\eta^2}{2}\left(M\bar G_{\mathrm{agg}}^2+\bar\Sigma\right).
\label{eq:one-step-final}
\end{align}
The complete derivation is given in Appendix~\ref{app:proof-prop-upper-steady}.

\begin{proposition}[Sufficient loss-domain bound]
\label{prop:upper-steady}
Under Assumptions~\ref{ass1}--\ref{ass4}, suppose $0<\eta\le 1/(4L)$. For any admissible selection sequence generated by FedMIX or by any other law in $\mathcal Q$, the iterates satisfy, for all $R\ge1$,
\begin{equation}
\mathbb E[F(w^{(R)})-F(w^*)]
\le
\left(1-\frac{\eta\mu}{2}\right)^R
\left(F(w^{(0)})-F(w^*)\right)+E_\infty,
\label{eq:finite-time}
\end{equation}
where
\begin{align}
E_\infty&=\frac{2}{\eta\mu}\Psi,
\notag\\
\Psi&=
4\eta\left((C_g\bar B+\bar\chi)^2+\bar D_E\right)
+\frac{L\eta^2}{2}\left(M\bar G_{\mathrm{agg}}^2+\bar\Sigma\right).
\label{eq:error-floor-components}
\end{align}
Consequently,
\begin{equation}
\limsup_{R\to\infty}\mathbb E[F(w^{(R)})-F(w^*)]
\le E_\infty .
\label{eq:steady-state}
\end{equation}
\end{proposition}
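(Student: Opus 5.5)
The plan is to first establish the one-step contraction \eqref{eq:one-step-final} rigorously, and then unroll it as a linear recursion.

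\textbf{One-step bound.} Start from the smoothness inequality \eqref{eq:smooth-step} for the server update $w^{(r+1)}=w^{(r)}-\eta g_r$. Write $m_r=\nabla F(w^{(r)})+b^{(r)}+d^{(r)}$ for the conditional mean. Then the second moment splits as in \eqref{eq:app-second-moment}, with $\mathbb E\|g_r\|^2\le\|m_r\|^2+\sigma_r^2$. This identity is exact, since the cross term of the centered part vanishes conditionally.

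For the inner product, use
\[
-\langle\nabla F,m_r\rangle=-\|\nabla F\|^2-\langle\nabla F,b^{(r)}+d^{(r)}\rangle,
\]
together with Young's inequality
\[
|\langle\nabla F,e\rangle|\le\tfrac14\|\nabla F\|^2+\|e\|^2,
\]
where $e=b^{(r)}+d^{(r)}$.

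For the quadratic term, bound $\|m_r\|^2\le 2\|\nabla F\|^2+2\|e\|^2$. Since $\eta\le1/(4L)$, we have $L\eta^2\|m_r\|^2/2\le\frac{\eta}{4}\|\nabla F\|^2+\frac{\eta}{4}\|e\|^2$.

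Collecting terms, the coefficient on $-\eta\|\nabla F\|^2$ is at least $1/2$. Bound $\|e\|^2\le2\|b^{(r)}\|^2+2\|d^{(r)}\|^2$, and then use \eqref{eq:bt-square} and \eqref{eq:local-drift-bound}. The total coefficient on $(C_g\bar B+\bar\chi)^2+\bar D_E$ is $\eta(2+\tfrac12)\le4\eta$, which gives the slack in the stated constant.

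Next, apply the PL inequality \eqref{eq:PL} in the form $-\tfrac{\eta}{2}\|\nabla F\|^2\le-\eta\mu(F-F^*)$. This is stronger than the needed $-\tfrac{\eta\mu}{2}(F-F^*)$, so the factor $(1-\eta\mu/2)$ follows a fortiori. Finally, insert \eqref{eq:signma} for $\sigma_r^2$ and take total expectation.

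\textbf{Main obstacle.} The delicate point is that the bias and drift bounds must be uniform in $r$ and in the history-dependent law $q_r$. This holds because $\bar B$ and $\bar\chi$ are defined as suprema over $\mathcal Q$ and $\mathcal W$. It is therefore crucial to argue that FedMIX's realized $q_r$ lies in $\mathcal Q$ and that the iterates remain in $\mathcal W$; the latter I would take as part of the admissible-domain assumption. The same uniformity gives $\sigma_r^2\le M\bar G_{\mathrm{agg}}^2+\bar\Sigma$ through \eqref{eq:gagg-bound}. That bound uses $\|a+b+c\|^2\le3(\|a\|^2+\|b\|^2+\|c\|^2)$, $\|\nabla F\|^2\le G^2$, $\|b^{(r)}\|\le C_g\bar B+\bar\chi$, and $\|d^{(r)}\|^2\le\bar D_E$.

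\textbf{Unrolling.} Set $\Delta_r=\mathbb E[F(w^{(r)})-F(w^*)]$ and $\gamma=1-\eta\mu/2\in(0,1)$; note $\eta\mu\le\eta L\le1/4$ because $\mu\le L$. The one-step bound reads $\Delta_{r+1}\le\gamma\Delta_r+\Psi$ with a constant $\Psi$ independent of $r$. Induction gives
\[
\Delta_R\le\gamma^R\Delta_0+\Psi\sum_{k=0}^{R-1}\gamma^k\le\gamma^R\Delta_0+\frac{\Psi}{1-\gamma}=\gamma^R\Delta_0+\frac{2}{\eta\mu}\Psi.
\]
This is \eqref{eq:finite-time}. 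Letting $R\to\infty$, $\gamma^R\to0$, which yields \eqref{eq:steady-state}.
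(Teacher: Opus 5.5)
Your proposal is correct and follows essentially the same route as the paper's appendix proof: the descent lemma, the exact second-moment split, the Young bound $|\langle\nabla F,e\rangle|\le\frac14\|\nabla F\|^2+\|e\|^2$, then $\|e\|^2\le2\|b^{(r)}\|^2+2\|d^{(r)}\|^2$, then PL (which in fact gives the factor $1-\eta\mu$, relaxed to the stated $1-\eta\mu/2$), and finally unrolling the linear recursion. The differences are cosmetic. You absorb $L\eta^2\le\eta/4$ before collecting terms and get the coefficient $5\eta/2$, whereas the paper uses $\eta+L\eta^2\le2\eta$ to get $4\eta$; you also state explicitly that $1-\eta\mu/2\in(0,1)$ and that the iterates must remain in $\mathcal W$ with $q_r\in\mathcal Q$, which the paper leaves implicit.
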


The result is a sufficient upper bound. It does not state that FedMIX attains the smallest possible number of rounds, nor that the analytical loss-reduction threshold is identical to the empirical classification-accuracy threshold used in Sec.~VI. Its role is to expose the design trade-off: smaller LDD-induced mismatch $\bar B$, smaller residual shift $\bar\chi$, smaller local drift $\bar D_E$, and lower stochastic variance reduce the attainable floor, while larger selected cohorts or external probing can increase per-round latency.

To achieve a target precision $\varepsilon>E_\infty$, it is sufficient that
\begin{equation}
R
\ge
\frac{2}{\eta\mu}
\log\frac{F(w^{(0)})-F(w^*)}{\varepsilon-E_\infty}.
\label{eq:rounds}
\end{equation}

\begin{corollary}[DAG-level completion bound]
\label{cor:dag-completion}
Suppose Assumptions~\ref{ass1}--\ref{ass4} hold for every task $v$, with task-specific constants and error floor $E_{\infty,v}$. For any target $\varepsilon_v>E_{\infty,v}$, if
\begin{equation}
R_v
\ge
\frac{2}{\eta_v\mu_v}
\log\frac{F_v(w_v^{(0)})-F_v(w_v^*)}{\varepsilon_v-E_{\infty,v}},
\label{eq:task-rounds-dag}
\end{equation}
then $\mathbb E[F_v(w_v^{(R_v)})-F_v(w_v^*)]\le\varepsilon_v$. For the layer-sequential DAG in~\eqref{eq:total_objective},
\begin{equation}
T_{\mathrm{DAG}}
\le
\sum_{l=1}^{L}\max_{v\in\mathcal S_l}R_v\bar T_v,
\qquad
\bar T_v=\sup_{1\le r\le R_v}T_v^{(r)} .
\label{eq:dag-time-bound}
\end{equation}
\end{corollary}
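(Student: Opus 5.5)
The corollary has two parts, a per-task round bound and a DAG-level time bound, and I will prove them in that order.

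\textbf{Per-task round bound.} Fix a task $v$ and apply Proposition~\ref{prop:upper-steady} with its own constants $(\eta_v,\mu_v,L,\bar B,\bar\chi,\bar D_E,M,\bar\Sigma)$. This gives
\[
\mathbb E[F_v(w_v^{(R_v)})-F_v(w_v^*)]\le(1-\eta_v\mu_v/2)^{R_v}\Delta_v^{0}+E_{\infty,v},
\qquad
\Delta_v^{0}\triangleq F_v(w_v^{(0)})-F_v(w_v^*).
\]
Since $\eta_v\le 1/(4L)$, we have $\eta_v\mu_v/2\in(0,1)$. Using $1-x\le e^{-x}$, the contraction factor is at most $\exp(-\eta_v\mu_v R_v/2)$.

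Under~\eqref{eq:task-rounds-dag}, this exponential is at most $(\varepsilon_v-E_{\infty,v})/\Delta_v^{0}$. Here $\varepsilon_v-E_{\infty,v}>0$, so the logarithm is well defined. If $\Delta_v^{0}\le \varepsilon_v-E_{\infty,v}$, the right-hand side of~\eqref{eq:task-rounds-dag} is nonpositive and the claim holds for every $R_v\ge1$. Combining the two terms then gives the $\varepsilon_v$ guarantee. This is exactly~\eqref{eq:rounds}, specialized per task.

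\textbf{DAG time bound.} This part is deterministic bookkeeping. By~\eqref{eq:task_total_time}, $T_v=\sum_{r=1}^{R_v}T_v^{(r)}\le R_v\sup_{1\le r\le R_v}T_v^{(r)}=R_v\bar T_v$, since each summand is at most the supremum. By the layer-sequential structure in~\eqref{eq:total_objective}, $T_{\mathrm{DAG}}=\sum_{l}\max_{v\in\mathcal S_l}T_v$. Monotonicity of $\max$ and of the sum then gives~\eqref{eq:dag-time-bound}.

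To connect the two parts, I interpret $R_v$ in the time bound as the number of rounds actually run for task $v$, chosen to satisfy~\eqref{eq:task-rounds-dag}, so that the loss-domain target holds in expectation.

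\textbf{Main obstacle.} The step needing the most care is the interpretation of $R_v$. In the system model, $R_v$ is the first round at which the accuracy metric reaches $\tau_v$. In the corollary it is a sufficient round count for the loss criterion. I will state explicitly that the bound applies when each task is run for the prescribed $R_v$ rounds, or is stopped no later than that, so that the realized round count is at most $R_v$ and $T_v$ is at most the bound above.

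The second subtlety is that the per-task application of the proposition requires the assumptions to hold uniformly along each task's trajectory. Cross-task coupling through shared clients across layers only affects latencies, not the per-task contraction, so I expect it to pose no difficulty.
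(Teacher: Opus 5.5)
Your proposal is correct and follows the paper's proof: apply Proposition~\ref{prop:upper-steady} task by task, which with $(1-x)^R\le e^{-xR}$ is exactly the appendix derivation of \eqref{eq:rounds}, then bound $T_v\le R_v\bar T_v$ via $T_v^{(r)}\le\bar T_v$ and use the parallel-within-layer, sequential-across-layers structure of \eqref{eq:total_objective}. Your remarks on the degenerate case $F_v(w_v^{(0)})-F_v(w_v^*)\le\varepsilon_v-E_{\infty,v}$ and on reading $R_v$ as a prescribed rather than accuracy-triggered round count are useful clarifications the paper leaves implicit; the one small point is that $\eta_v\mu_v/2<1$ does not follow from $\eta_v\le 1/(4L)$ alone but also needs $\mu_v\le L$, which is standard for PL plus $L$-smooth objectives.
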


\begin{IEEEproof}
Apply Proposition~\ref{prop:upper-steady} to each task and use $T_v^{(r)}\le\bar T_v$. Since tasks in the same layer execute in parallel and layers execute sequentially, summing the maximum duration in each layer gives~\eqref{eq:dag-time-bound}.
\end{IEEEproof}

\subsection{Design Implications for FedMIX}

Equation~\eqref{eq:bbar-fedmix} explains the role of the adaptive probing budget. In a realized round, increasing $B_v^{(r)}$ can improve exploration and reduce stale in-cluster bias, but it also increases the worst-case perturbation term and may raise the tail latency $T_v^{(r)}=\max_{u\in\mathcal A_v^{(r)}}t_{u,v}^{(r)}$. FedMIX therefore uses the budget rule in~\eqref{eq:adaptive_budget} to increase probing when recent progress stagnates or external uncertainty is high, and to reduce probing under latency congestion. The uncertainty bonus in~\eqref{eq:ucb_bonus} does not alter the proof structure; instead, it changes the realized selection law $q_r$ toward under-sampled clients and clusters while remaining inside the admissible class $\mathcal Q$.

A practical choice of the in-cluster ratio $\rho$ should be made jointly with the maximum probing budget. Let
\begin{equation}
K_v^{\max}(\rho)=\lceil \rho M_v\rceil+B_{\max}\le K_{\mathrm{cap}}.
\end{equation}
Under conditionally independent client noise and near-uniform aggregation, a common variance scaling gives
\begin{equation}
\bar\Sigma_{\rho,B_{\max}}
\le
\frac{\alpha\ell_{\max}+\beta}{K_v^{\max}(\rho)}.
\label{eq:rho-variance}
\end{equation}
Thus, larger $\rho$ or larger probing budgets may reduce variance and improve coverage, but also increase synchronous tail latency. This is why the bound should be read as guidance for balancing bias, variance, drift, and latency, rather than as a direct predictor of the empirical curves.

\subsection{Relationship Between LDD and Loss}
\label{sec:upper-bound-steady-state}

The finite-time bound shows that LDD enters the loss-domain analysis through the sampling-bias term $\bar B$. With residual shift included in Assumption~\ref{ass2}, the analysis does not assume that label histograms fully determine client gradients. Instead, LDD is a lightweight management proxy whose explanatory power is conditioned on the residual term $\bar\chi$. If label skew is the dominant source of heterogeneity, reducing intra-cluster LDD can tighten the bound; if covariate shift is strong, the residual term records the limitation of LDD-based coordination. Overall, the theory supports the design of statistically coherent clusters, adaptive exploration, and latency-aware participation, while keeping the claim appropriately framed as a sufficient analytical guideline for dependent FL flow management.

\section{Clustered Client Coordination and Dependency-Aware Scheduling}

Following the previous analysis, our design integrates three components. First, we use LDD-based clustering with greedy balancing to form statistically coherent and size-balanced client groups, reducing selection complexity and providing a stable coordination backbone. Second, we propose FedMIX, an uncertainty-aware intra--inter cluster participation mechanism that refines client participation using loss, latency, and exploration states. FedMIX preserves the efficiency of in-cluster exploitation while using adaptive inter-cluster probing to control cross-cluster diversity and probing overhead. Finally, we adopt a PPO-based DAG scheduler to determine dependency-respecting task execution and reduce workflow-level latency. Together, these components provide a resource-aware and communication-efficient coordination framework for dependent multi-task FL.

\subsection{LDD-based Clustering with Greedy Balancing}

Motivated by the observation that convergence improves when intra-cluster LDD is small and cluster sizes are balanced, we group clients using LDD-based clustering and then apply a lightweight greedy balancing step to equalize cluster cardinalities while preserving statistical homogeneity. Beyond improving statistical coherence, balancing also serves a system-level role: in DAG-structured multi-task FL, highly unbalanced clusters may create load imbalance and unstable per-round latency. The resulting clusters provide a stable structure for the subsequent intra- and inter-cluster coordination. The full procedure is summarized in Algorithm~\ref{alg:cluster_balance}.

The algorithm is computationally efficient after its one-time initialization cost is amortized. Constructing the LDD distance matrix and performing agglomerative clustering require $O(U^2Q+U^2\log U)$ for $U$ clients with $Q$-class label histograms. Although this initialization cost is not eliminated by clustering, the benefit appears in subsequent coordination. A naive non-clustered approach repeatedly searches over the entire client pool, whereas the balanced cluster structure localizes most recurring selection operations within clusters. This reduces the effective post-clustering coordination cost and makes intra-/inter-cluster participation control scalable across rounds and tasks.

\begin{algorithm}[!t]
\caption{LDD-based Clustering with Greedy Balancing}
\label{alg:cluster_balance}
\begin{algorithmic}[1]
\Require Client label histograms $\{h_u\}_{u=1}^U$, desired cluster number $K$
\State Normalize each histogram $h_u$ to a probability vector $p_u = h_u/\sum_d h_u[d]$
\State Compute pairwise LDD distance $D^{\mathrm{LDD}}_{u,u'}=\|p_u-p_{u'}\|_1$
\State Agglomerative clustering on $D^{\mathrm{LDD}}$ $\Rightarrow$ labels $\ell_u\in\{1,\dots,K\}$
\State $\mathcal{C}_i\gets\{u:\ell_u=i\}$,\quad $t_{\text{low}}=\lfloor U/K\rfloor$, $t_{\text{high}}=\lceil U/K\rceil$
\State Assign per-cluster targets $t_i\in\{t_{\text{low}},t_{\text{high}}\}$
\While{clusters not balanced}
  \State Identify oversized clusters $\mathcal{O}=\{i:|\mathcal{C}_i|>t_i\}$ and undersized clusters $\mathcal{I}=\{j:|\mathcal{C}_j|<t_j\}$
  \State Compute cluster medoid $m_i=\arg\min_{u\in\mathcal{C}_i}\sum_{u'\in\mathcal{C}_i}D^{\mathrm{LDD}}_{u,u'}$
  \For{each $i\in\mathcal{O}$ and $u\in\mathcal{C}_i$}
     \For{each $j\in\mathcal{I}$}
        \State Evaluate cost gain $\Delta_{u,i\to j}=D^{\mathrm{LDD}}_{u,m_j}-D^{\mathrm{LDD}}_{u,m_i}$
     \EndFor
  \EndFor
  \State Move the client with the smallest $\Delta_{u,i\to j}$ from $i$ to $j$ and update medoids
\EndWhile
\State \Return balanced clusters $\mathcal{C}=\{\mathcal{C}_1,\dots,\mathcal{C}_K\}$
\end{algorithmic}
\end{algorithm}

\subsection{Multi-task Intra--Inter Cluster Exploration--Exploitation}

FedMIX operates on the balanced LDD-based clusters produced by Algorithm~\ref{alg:cluster_balance}. It preserves the two-stage intra-/inter-cluster structure: the assigned cluster provides the main exploitation backbone, while external clusters are selectively probed to provide controlled diversity. Compared with a pure loss--latency ranking rule, FedMIX makes the exploration--exploitation tradeoff explicit by adding an uncertainty bonus for under-sampled clients and clusters. It also replaces a fixed external probing budget with an adaptive task-round budget.

For each task-client pair $(v,u)$, the server maintains the EMA loss $\hat\ell_{v,u}^{(r)}$, the EMA latency $\hat t_{v,u}^{(r)}$, and the selection count $m_{v,u}^{(r)}$. FedMIX ranks in-cluster clients by the utility in~\eqref{eq:fedmix_utility}, where the loss term captures statistical utility, the latency term penalizes slow participants, and the uncertainty term encourages controlled exploration of less frequently selected clients. For inter-cluster probing, FedMIX applies the same principle at the external-cluster level and uses the adaptive budget $B_v^{(r)}$ in~\eqref{eq:adaptive_budget}. Thus, probing increases when recent task progress stagnates or external uncertainty is high, and decreases under latency congestion.

The overall workflow is illustrated in Fig.~\ref{fig:fedmix-flowchart}. In each round, FedMIX ranks in-cluster clients, adaptively probes external clusters, merges the selected participants under the global cap $K_{\mathrm{cap}}$, and updates EMA statistics after local training. This design keeps the clustered selection backbone while making external exploration a controlled resource rather than a fixed overhead.

\begin{figure}[htbp]
    \centering
    \includegraphics[width=1.0\linewidth]{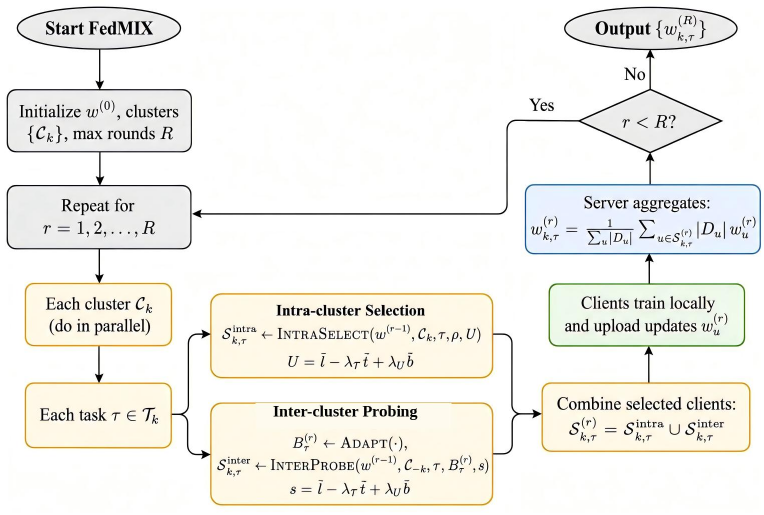}
    \caption{Flowchart of FedMIX. The intra branch ranks in-cluster clients by a loss--latency--uncertainty utility, while the inter branch adaptively probes external clusters using the task-round probing budget $B_v^{(r)}$.}
    \label{fig:fedmix-flowchart}
\end{figure}

\begin{algorithm}[htbp]
\caption{Loss-, Latency-, and Uncertainty-aware Intra-cluster Selector}
\label{alg:fedmix_intra}
\begin{algorithmic}[1]
\Require Cluster $\mathcal{C}_{\phi(v)}$, model $w_v^{(r)}$, ratio $\rho$, eval budget $B_{\mathrm{eval}}$, cap $K_{\mathrm{cap}}$, weights $(\lambda_T,\lambda_U)$, EMA params $(\rho_{\ell},\rho_g,\rho_T)$, externals $\mathcal{R}_v^{(r)}$
\State $\mathcal{M}_v\gets\{u\in\mathcal{C}_{\phi(v)}:|D_{v,u}|>0\}$,\quad $K_{\mathrm{intra}}\gets\lceil \rho|\mathcal{M}_v|\rceil$
\For{$u\in\mathcal{M}_v$}
    \State Estimate local loss $\widehat{\ell}_{u,v}^{(r)}\gets \frac{1}{B_{\mathrm{eval}}}\sum_{b=1}^{B_{\mathrm{eval}}}\mathcal{L}(w_v^{(r)};\mathcal{B}_{u,b})$
    \State Update loss EMA and compute $b_{u,v}^{(r)}=\sqrt{\log(r+1)/(m_{v,u}^{(r)}+1)}$
    \State Compute utility $U_{u,v}^{(r)}=\widetilde{\ell}_{u,v}^{(r)}-\lambda_T\widetilde{t}_{u,v}^{(r)}+\lambda_U\widetilde{b}_{u,v}^{(r)}$
\EndFor
\State $S_{v,\mathrm{intra}}^{(r)}\gets$ top-$K_{\mathrm{intra}}$ clients in $\mathcal{M}_v$ by $U_{u,v}^{(r)}$
\State $\mathcal{A}_v^{(r)}\gets\operatorname{dedup}(S_{v,\mathrm{intra}}^{(r)}\cup\mathcal{R}_v^{(r)})$
\While{$|\mathcal{A}_v^{(r)}|>K_{\mathrm{cap}}$}
    \State Remove one client with the lowest utility; ties are broken by larger latency
\EndWhile
\For{$u\in\mathcal{A}_v^{(r)}$}
    \State Run local SGD, compute improvement, and update $g_{u,v}$, $t_{u,v}^{\mathrm{EMA}}$, and the selection count $m_{v,u}^{(r)}$
\EndFor
\State \Return $S_{v,\mathrm{intra}}^{(r)}$, $\mathcal{A}_v^{(r)}$
\end{algorithmic}
\end{algorithm}

To complement local exploitation, FedMIX adaptively probes external clusters using representative loss, latency, and uncertainty signals. Only a bounded number of external clients are temporarily recruited, and their participation lifetime is capped by $H$ to avoid persistent cross-cluster assignment. Unlike fixed-budget probing, the adaptive budget $B_v^{(r)}$ allows FedMIX to increase external exploration when training stagnates or external clusters are under-explored, while reducing probing under latency congestion. The procedure is summarized in Algorithm~\ref{alg:fedmix_inter}. 

\begin{algorithm}[htbp]
\caption{Adaptive Inter-cluster Probing and Recruitment}
\label{alg:fedmix_inter}
\begin{algorithmic}[1]
\Require Task $v$, assigned cluster $\phi(v)$, model $w_v^{(r)}$, clusters $\{\mathcal{C}_i\}$, active set $\mathcal{A}_v^{(r)}$, lifetimes $\Xi^{(r-1)}$, cap $K_{\mathrm{cap}}$, budget parameters $(B_{\min},B_0,B_{\max})$, horizon $H$, weights $(\lambda_T,\lambda_U)$, threshold $\kappa$, probe budget $B_{\mathrm{probe}}$
\State $\Xi^{(r)}\gets\Xi^{(r-1)}$ and remove expired clients from $\mathcal{A}_v^{(r)}$
\State $\mathcal{J}\gets\{j\neq\phi(v):\exists u\in\mathcal{C}_j,\ |D_{v,u}|>0\}$
\For{$j\in\mathcal{J}$}
    \State Pick representative $u_j=\arg\max_{u\in\mathcal{C}_j}(|D_{v,u}|,-t_{u,v}^{\mathrm{EMA}})$
    \State Estimate probing loss $\widehat{\ell}_{j,v}^{(r)}\gets \frac{1}{B_{\mathrm{probe}}}\sum_b \mathcal{L}(w_v^{(r)};\mathcal{B}_{u_j,b})$
    \State Update cluster loss EMA and compute $b_{j,v}^{(r)}=\sqrt{\log(r+1)/(N_{j,v}^{\mathrm{probe}}+1)}$
    \State Compute external score $s_{j,v}^{(r)}=\widetilde{\ell}_{j,v}^{(r)}-\lambda_T\widetilde{t}_{j,v}^{(r)}+\lambda_U\widetilde{b}_{j,v}^{(r)}$
\EndFor
\If{$\kappa>0$ and $|\mathcal{J}|>1$}
    \State Keep only clusters with $\widehat{\ell}_{j,v}^{(r)}\ge\mu+\kappa\sigma$
    \If{$\mathcal{J}=\emptyset$} \Return $(\mathcal{A}_v^{(r)},\Xi^{(r)})$ \EndIf
\EndIf
\State Compute $B_v^{(r)}=\operatorname{clip}(B_0+I_{\mathrm{stag}}^{(r)}+I_{\mathrm{unc}}^{(r)}-I_{\mathrm{cong}}^{(r)},B_{\min},B_{\max})$
\State $B_{\mathrm{eff}}\gets\min\{B_v^{(r)},|\mathcal{J}|,\max(0,K_{\mathrm{cap}}-|\mathcal{A}_v^{(r)}|)\}$
\State $\mathcal{J}_B\gets$ top-$B_{\mathrm{eff}}$ clusters in $\mathcal{J}$ by $s_{j,v}^{(r)}$
\For{$j\in\mathcal{J}_B$}
    \State Pick $u^*=\arg\max_{u\in\mathcal{C}_j}(|D_{v,u}|,-t_{u,v}^{\mathrm{EMA}})$
    \State $\mathcal{A}_v^{(r)}\gets\mathcal{A}_v^{(r)}\cup\{u^*\}$, $\Xi^{(r)}[u^*]\gets H$, $N_{j,v}^{\mathrm{probe}}\gets N_{j,v}^{\mathrm{probe}}+1$
\EndFor
\State \Return $(\mathcal{A}_v^{(r)},\Xi^{(r)})$
\end{algorithmic}
\end{algorithm}

\subsection{Dependency-Aware Scheduling}
While FedMIX optimizes client participation at the learning layer, the overall training efficiency is fundamentally constrained by the execution order of dependent tasks. To handle this complementary dimension, we adopt a PPO-based DAG scheduling algorithm from our prior conference work~\cite{luo2025cluster}, which minimizes the total latency of dependent multi-task FL execution. The scheduler represents the system state using each task's readiness status, the real-time occupancy of all clusters, and a precomputed processing-time matrix for all task--cluster pairs. A PPO agent then maps this state to dependency-respecting cluster assignments that exploit available parallelism, iteratively refining its policy to reduce the overall makespan. Here, we keep the dependency scheduler fixed across the compared methods and use the A-CoDa/Intra-FL/Inter-FL ablations to isolate the effect of FedMIX-style client coordination. More detailed discussions on the scheduler can be found in our prior work~\cite{luo2025cluster}. A custom callback tracks makespan evolution and records full schedules, ensuring that the learned policy minimizes total latency while satisfying the task completion requirements. 

\section{Numerical Experiment}

\subsection{Experimental Setup} 

We consider four distinctive tasks on classification datasets. The tasks are organized into a three-layer DAG: MNIST digit recognition (Task 1, Layer 1), UCI-HAR smartphone inertial-sensor activity recognition (Task 2, Layer 2)~\cite{anguita2013public}, FashionMNIST product-image classification (Task 3, Layer 2), and Pneumonia X-ray classification from MedMNIST (Task 4, Layer 3)~\cite{yang2023medmnist,kermany2018identifying}. This suite incorporates lightweight vision, wearable sensing, product inspection, and medical-imaging workloads while preserving the same dependency structure. The choice is semantically aligned with IoT-edge deployments: MNIST represents lightweight handwritten-input recognition, UCI-HAR represents mobile and wearable sensing, FashionMNIST represents product-image inspection, and Pneumonia X-ray represents IoMT diagnostic imaging at edge clinics or hospital gateways. All methods use the same PyTorch CNN/GPU validation code path, non-IID data partitions, and wireless setting, so the comparison isolates client selection and dependency-aware scheduling rather than implementation differences. Values of key parameters are summarized in Table~\ref{tab:default-config}.


\begin{table}[t]
\centering
\caption{Default Experimental Configurations}
\label{tab:default-config}
\renewcommand{\arraystretch}{1.15}
\begin{tabularx}{\linewidth}{@{}l X@{}}
\toprule
\textbf{Items} & \textbf{Value} \\
\midrule
\multicolumn{2}{@{}l}{\textbf{Client and System Settings}} \\
CPU frequency, $f_u$ & $f_u \sim \mathcal U(1.2\,\mathrm{GHz},\,2.5\,\mathrm{GHz})$ \\
Computation cost per bit, $C_{D_u}$ & $1000\,\mathrm{cycles/bit}$ \\
Local model size, $S$ & $5\times 10^{7}$ bits \\
System bandwidth, $B$ & $5\,\mathrm{MHz}$ \\
Receiver noise power, $\sigma^2$ & $-107\,\mathrm{dBm}$ \\
Transmit power, $p_u$ & $0.05\,\mathrm{W}$ \\
Accuracy thresholds $\{\tau_1,\ldots,\tau_4\}$ & $\{0.82,\ 0.85,\ 0.72,\ 0.80\}$ \\
\addlinespace[3pt]
\multicolumn{2}{@{}l}{\textbf{Wireless Channel Settings}} \\
Large-scale channel gain, $\bar h_u$ & $\bar h_u \sim \mathrm{exp}\!\big(\lambda = 2.5\times 10^{-7}\big)$ \\
Small-scale fading gain, $h_u$ & $h_u = \bar h_u \exp(z),\quad z \sim \mathcal N(0,0.15^2)$
\\
\bottomrule
\end{tabularx}
\end{table}

\subsection{Baselines}
We compare the total time and convergence performance of our proposed A-CoDa with the following baseline approaches:
\begin{itemize}
  \item \textbf{Cluster-oriented and Dependency-aware Client Selection for FL (CoDa-FL)}: The original framework proposed in our conference paper~\cite{luo2025cluster}, which considers the data heterogeneity across edge clients by performing LDD-based clustering to group clients with similar data distributions.  
  \item \textbf{Intra-cluster Client Selection for FL (Intra-FL)}: This is an ablated version of A-CoDa that dynamically selects clients within each cluster without any cross-cluster interaction.
  \item \textbf{Inter-cluster Client Selection for FL (Inter-FL)}: Inter-FL introduces cross-cluster collaboration and latency-awareness, serving as another ablation variant of A-CoDa that focuses on inter-cluster dynamics only.
  \item \textbf{Population Stability Index for Personalized FL (PSI-PFL)}: The baseline is a static client selection framework that leverages the Population Stability Index to quantify and mitigate client-level data heterogeneity~\cite{jimenez2025psi}.
  \item \textbf{Training-based Dynamic Clustered FL (TDCFL)}: It is a dynamic clustering framework that employs an Adaptive Distribution Similarity Metric combining data features and model updates, with a hierarchical scheduler to improve training efficiency~\cite{ren2025dynamic}.
\end{itemize}

\begin{figure*}[htbp]
  \centering
  \includegraphics[width=0.88\textwidth]{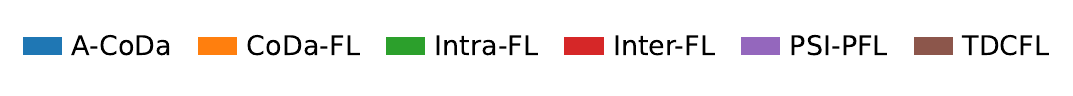}
  \vspace{0.15em}
  \par
  \begin{subfigure}[t]{0.45\textwidth}
    \centering
    \includegraphics[width=\linewidth]{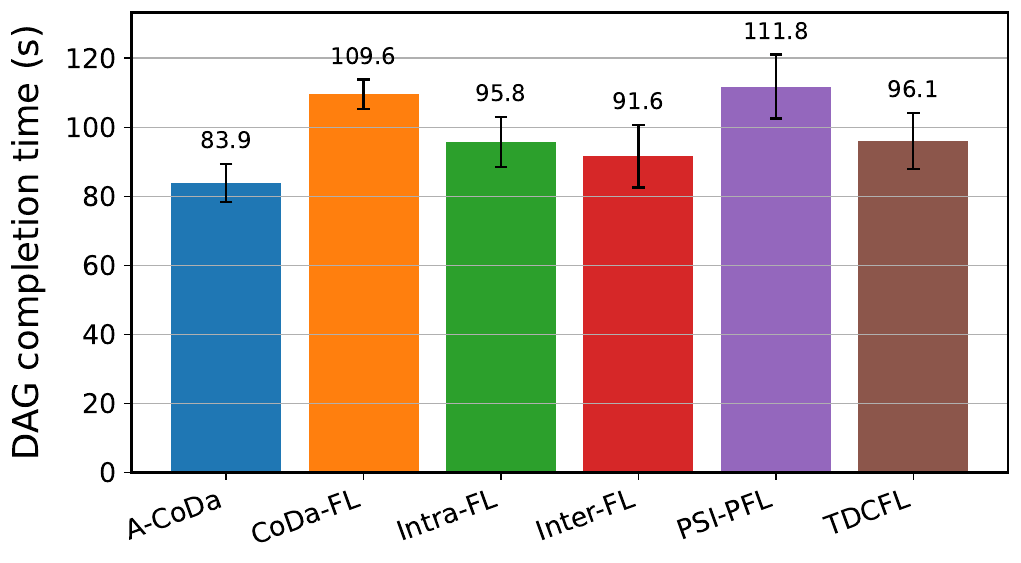}
    \caption{Total time of different approaches.}
    \label{fig:TT1}
  \end{subfigure}
  \begin{subfigure}[t]{0.45\textwidth}
    \centering
    \includegraphics[width=\linewidth]{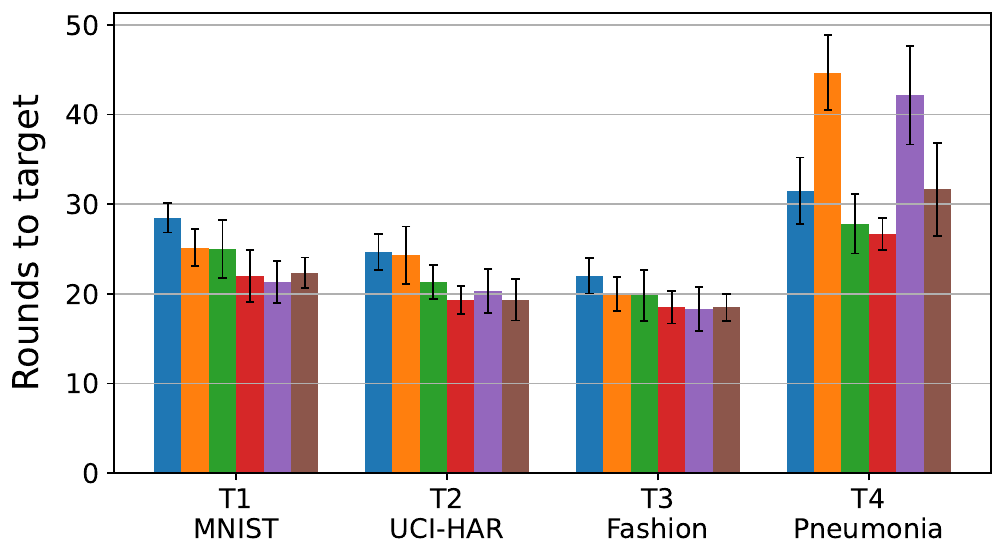}
    \caption{Number of rounds per task of different approaches.}
    \label{fig:round1}
  \end{subfigure}
  \caption{Total time and the number of rounds to convergence. Bars report the mean over six seeds, and error bars denote 95\% confidence intervals.}
  \label{fig:1-100}
\end{figure*}

\begin{figure*}[htbp]
  \centering
  \includegraphics[width=0.88\textwidth]{plots_unified/cnn_gpu/method_legend_bars.pdf}
  \vspace{0.15em}
  \par
  \begin{subfigure}[t]{0.46\textwidth}
    \centering
    \includegraphics[width=\linewidth]{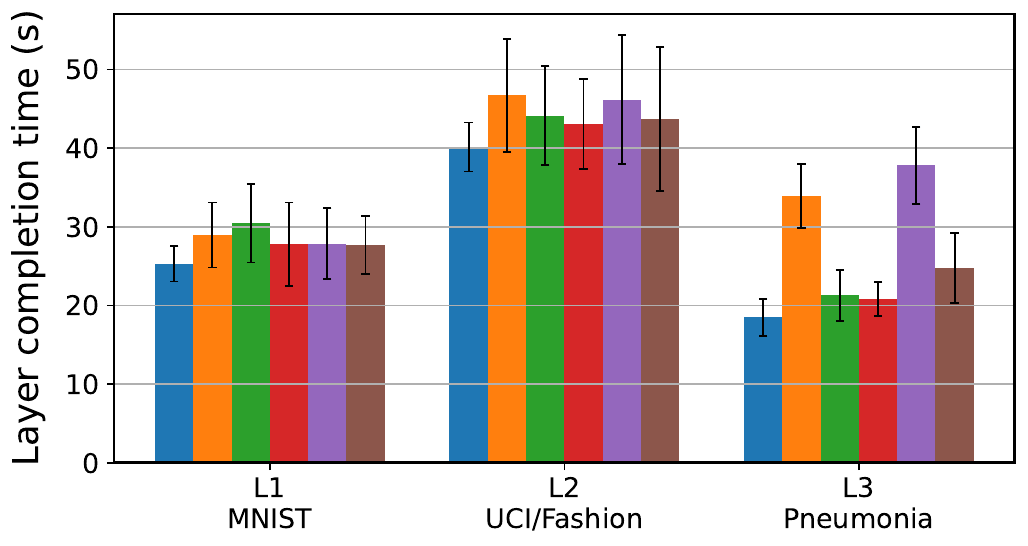}
    \caption{Total time per layer of different approaches.}
    \label{fig:layerTT1}
  \end{subfigure}
  \begin{subfigure}[t]{0.46\textwidth}
    \centering
    \includegraphics[width=\linewidth]{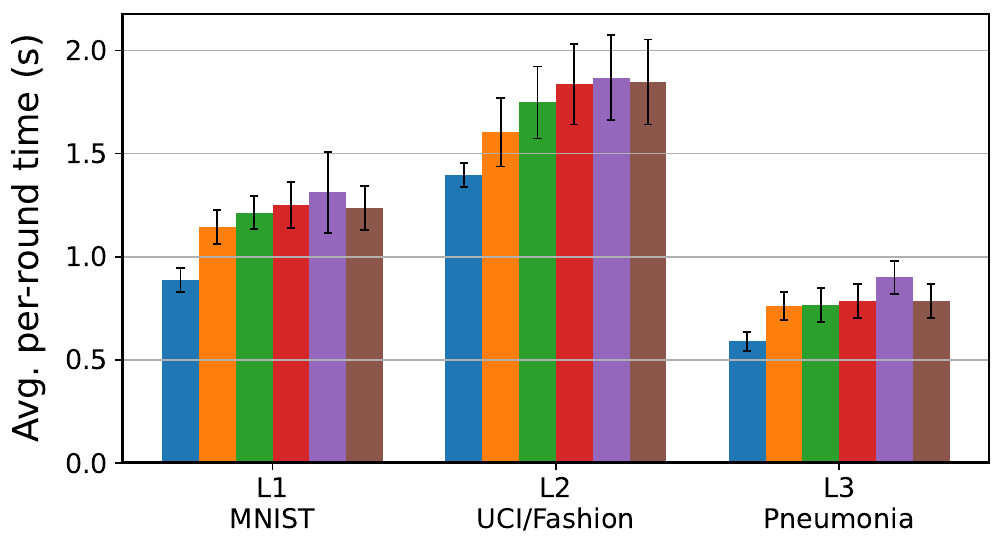}
    \caption{Average time per layer of different methods.}
    \label{fig:layeravgT1}
  \end{subfigure}

  \caption{Layer-wise time performance. Bars report the mean over six seeds, and error bars denote 95\% confidence intervals.}
  \label{fig:2-100}
\end{figure*}

\subsection{Numerical Results}
We evaluate the federated setting with 100 clients, as illustrated in Fig.~\ref{fig:1-100}--\ref{fig:3-100}. The results report target-reaching behavior over six seeds, with 95\% confidence intervals shown in the timing, round, and accuracy figures. The analysis focuses on (1) overall training efficiency, (2) task-level convergence behavior, and (3) layer-wise performance across dependent tasks.

For overall training efficiency, Fig.~\ref{fig:TT1} shows that A-CoDa achieves the lowest end-to-end DAG completion time. Following the DAG execution model in~\eqref{eq:total_objective}, total time is computed as the Layer-1 completion time plus the slower Layer-2 task plus the Layer-3 completion time, rather than as a naive sum of all four task times. Across six seeds, A-CoDa completes the dependent workload in 83.9 s, compared with 91.6 s for the closest baseline Inter-FL, corresponding to an 8.4\% reduction. Relative to CoDa-FL, Intra-FL, PSI-PFL, and TDCFL, A-CoDa reduces completion time by 23.4\%, 12.4\%, 25.0\%, and 12.7\%, respectively.

Fig.~\ref{fig:round1} shows the task-level convergence behavior. All methods reach the four target thresholds. A-CoDa requires 28.5 rounds on MNIST, 24.7 rounds on UCI-HAR, 22.0 rounds on FashionMNIST, and 31.5 rounds on Pneumonia X-ray on average. Although Inter-FL and several baselines require fewer rounds on individual tasks, A-CoDa has a substantially lower target-reaching time because its loss- and latency-aware selection avoids slow selected-client cohorts while retaining useful cross-cluster diversity.

Fig.~\ref{fig:layerTT1} provides the layer-wise timing breakdown. A-CoDa is the fastest method in all three DAG layers under the six-seed mean, with especially large gains in the downstream Pneumonia X-ray layer. Relative to the closest baseline in each layer, A-CoDa reduces Layer-1, Layer-2, and Layer-3 completion time by 8.7\%, 6.7\%, and 11.2\%, respectively. Fig.~\ref{fig:layeravgT1} further shows that the proposed selection policy reduces completion time through the latency of selected cohorts as well as through the number of rounds.

Fig.~\ref{fig:3-100} reports time-aligned accuracy trajectories, where the shaded bands denote 95\% confidence intervals across the six seeds. A-CoDa reaches the required accuracy thresholds on all layers and maintains competitive final accuracy. Some static or dynamic baselines attain slightly higher final accuracy or fewer rounds on individual layers, but they do so with higher end-to-end latency. Overall, the results demonstrate that A-CoDa provides the best completion-time--accuracy tradeoff for the dependent IoT-edge workload.

\begin{figure*}[htbp]
  \centering

  \includegraphics[width=0.88\textwidth]{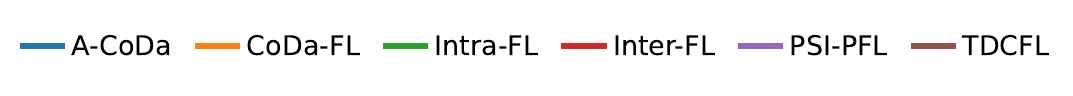}
  \vspace{0.15em}
  \par

  \begin{subfigure}[t]{0.31\textwidth}
    \centering
    \includegraphics[width=\linewidth]{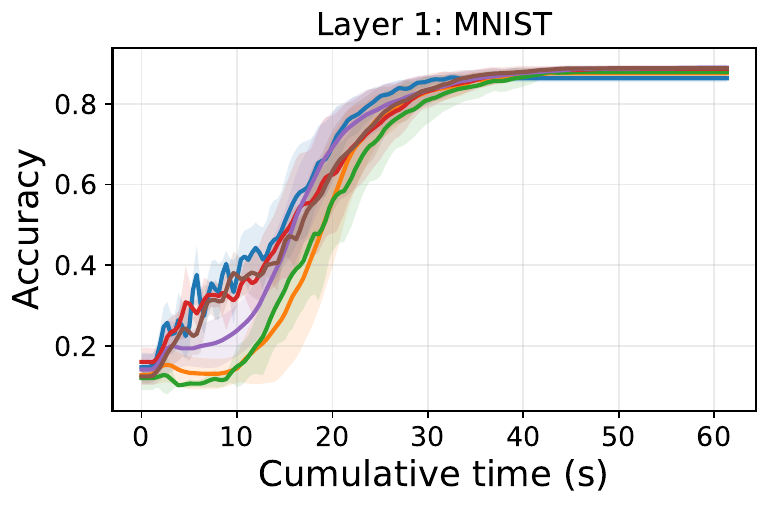}
    \caption{Layer 1.}
    \label{fig:layer1_1}
  \end{subfigure}
  \hfill
  \begin{subfigure}[t]{0.31\textwidth}
    \centering
    \includegraphics[width=\linewidth]{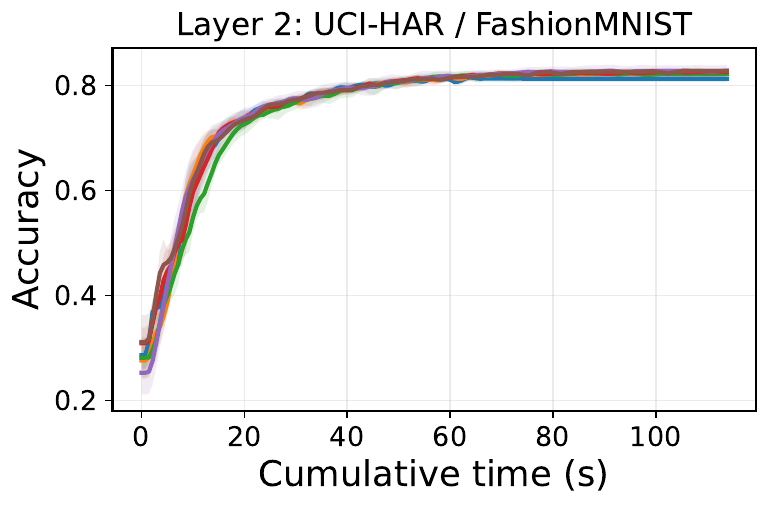}
    \caption{Layer 2.}
    \label{fig:layer2_1}
  \end{subfigure}
  \hfill
  \begin{subfigure}[t]{0.31\textwidth}
    \centering
    \includegraphics[width=\linewidth]{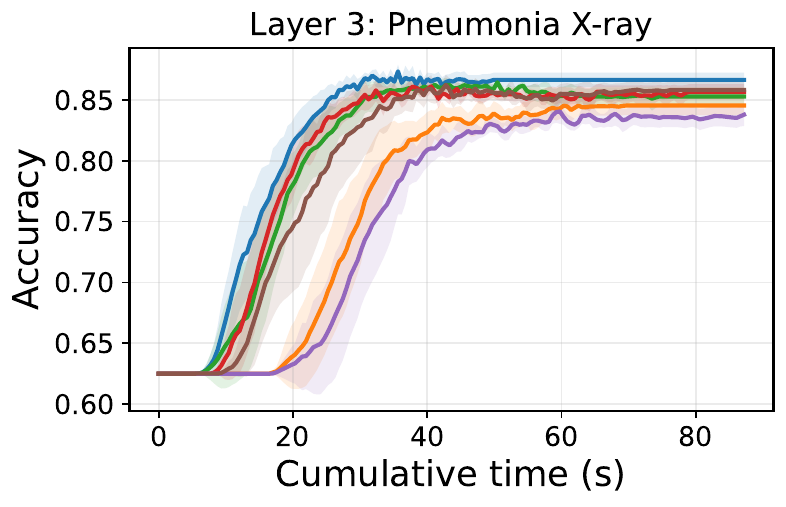}
    \caption{Layer 3.}
    \label{fig:layer3_1}
  \end{subfigure}

  \caption{Accuracy vs. time for different model layers. Solid lines report six-seed mean accuracy, and shaded bands denote 95\% confidence intervals.}
  \label{fig:3-100}
\end{figure*}

\subsection{Scalability Analysis}
To evaluate scalability with respect to the number of participating edge clients, we rerun the scalability study with $|\mathcal U|\in\{50,100,200,300,500\}$. The participation budget is scaled with the client pool size so that the selected-client fraction remains consistent across scales: MNIST uses a 50\% cap and UCI-HAR, FashionMNIST, and Pneumonia X-ray use 75\% caps. This avoids conflating scalability with an artificial reduction in participation rate at larger client counts. Because the middle-layer tasks are executed in parallel under the DAG, completion time is measured as the time for MNIST plus the slower of UCI-HAR and FashionMNIST plus Pneumonia X-ray, rather than as a simple sum over all tasks.

Fig.~\ref{fig:scalability} reports the target-reaching DAG completion time and total rounds for the scalability sweep. All methods reach the target thresholds at every client count over the six seeds. A-CoDa achieves the lowest mean DAG completion time at all client counts, with completion times of 147.4 s, 85.2 s, 51.9 s, 39.9 s, and 29.7 s for 50, 100, 200, 300, and 500 clients, respectively. Relative to the closest non-PSI baseline, the corresponding reductions are 9.7\%, 0.6\%, 12.3\%, 13.9\%, and 12.7\%. Overall, the scalability results confirm that the proposed dependency-aware client selection remains effective as the client pool grows, and that A-CoDa's latency-aware selection benefits remain most visible in the larger-client regimes.

\begin{figure*}[t]
  \centering
  \includegraphics[width=0.88\textwidth]{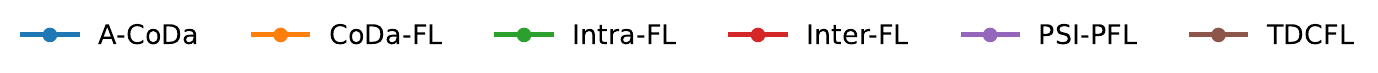}
  \vspace{0.15em}
  \par
  \begin{subfigure}[t]{0.46\textwidth}
    \centering
    \includegraphics[width=\linewidth]{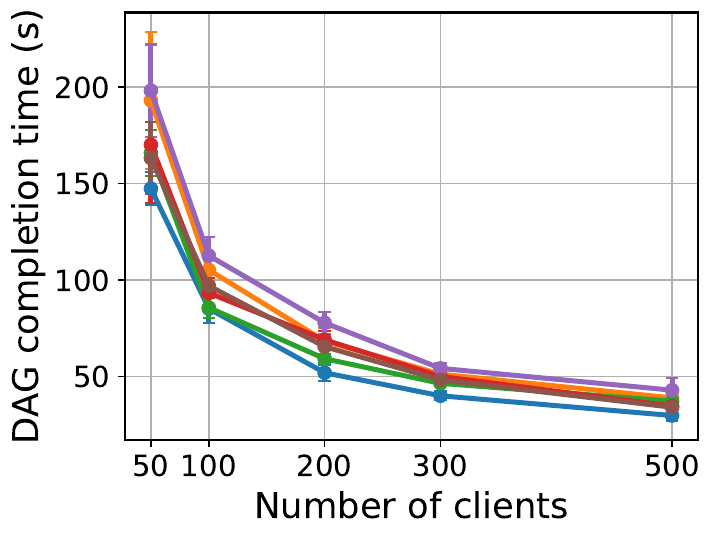}
    \caption{DAG completion time with different numbers of clients.}
    \label{fig:scalability-time}
  \end{subfigure}
  \begin{subfigure}[t]{0.46\textwidth}
    \centering
    \includegraphics[width=\linewidth]{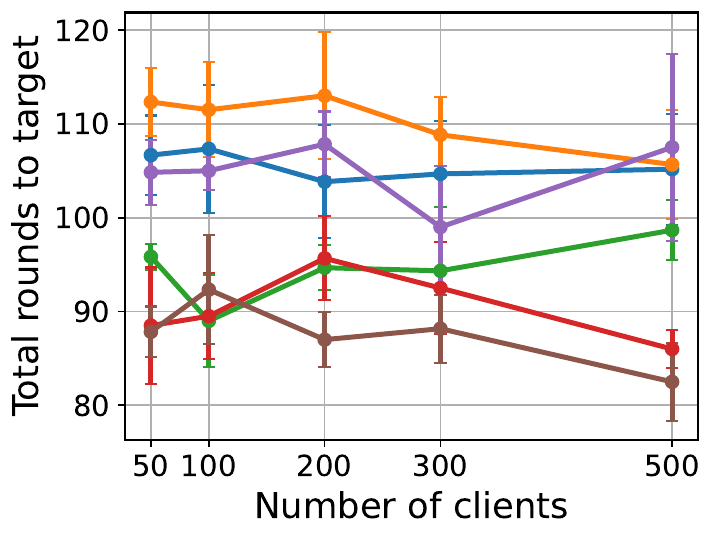}
    \caption{Total rounds to target with different numbers of clients.}
    \label{fig:scalability-rounds}
  \end{subfigure}
  \caption{Scalability with different numbers of clients. The participation cap is scaled with client count, and error bars denote 95\% confidence intervals over six seeds.}
  \label{fig:scalability}
\end{figure*}

\section{Conclusion}
In this paper, we proposed A-CoDa, a unified and dynamically adaptive framework for addressing client clustering and coordination challenges in latency-aware and dependency-aware federated learning workflows over heterogeneous edge networks. We first established a theoretical foundation that links convergence behavior to statistical heterogeneity, local-SGD drift, and loss-driven variance, revealing how these factors affect the sufficient round bound and the steady-state error floor. Guided by these insights, we developed an LDD-based balanced clustering scheme and FedMIX, a dynamic cluster participation mechanism that adapts to time-varying network conditions, client availability, and task dependencies. Experiments on handwriting, wearable-sensing, product-image, and medical-imaging tasks validated the effectiveness of A-CoDa, demonstrating reduced end-to-end completion time while maintaining competitive accuracy. Overall, this work provides a principled framework for scalable, dependency-aware, and latency-efficient federated learning workflow management in heterogeneous edge networks.


\appendices
\section{Proof of Proposition~\ref{prop:upper-steady}}
\label{app:proof-prop-upper-steady}

Let $e^{(r)}\triangleq b^{(r)}+d^{(r)}$ and abbreviate $\nabla F(w^{(r)})$ by $\nabla F$. By the descent lemma for the $L$-smooth objective $F$ and the update $w^{(r+1)}=w^{(r)}-\eta g_r$, we have
\begin{align}
\mathbb E[F(w^{(r+1)})\mid w^{(r)}]
&\le F(w^{(r)})
-\eta\langle \nabla F,\mathbb E[g_r\mid w^{(r)}]\rangle
+\frac{L\eta^2}{2}\mathbb E\|g_r\|^2 .
\label{eq:app-pre-young}
\end{align}
Using $\mathbb E[g_r\mid w^{(r)}]=\nabla F+e^{(r)}$ and~\eqref{eq:app-second-moment},
\begin{align}
\mathbb E[F(w^{(r+1)})\mid w^{(r)}]
&\le F(w^{(r)})-\eta\|\nabla F\|^2
-\eta\langle \nabla F,e^{(r)}\rangle
\notag\\
&\quad +\frac{L\eta^2}{2}\|\nabla F+e^{(r)}\|^2
+\frac{L\eta^2}{2}\sigma_r^2 .
\end{align}
Young's inequality gives
$-\langle\nabla F,e^{(r)}\rangle\le \frac14\|\nabla F\|^2+\|e^{(r)}\|^2$, and
$\|\nabla F+e^{(r)}\|^2\le2\|\nabla F\|^2+2\|e^{(r)}\|^2$. Therefore,
\begin{align}
\mathbb E[F(w^{(r+1)})\mid w^{(r)}]
&\le F(w^{(r)})
-\left(\frac{3\eta}{4}-L\eta^2\right)\|\nabla F\|^2
\notag\\
&\quad +(\eta+L\eta^2)\|e^{(r)}\|^2
+\frac{L\eta^2}{2}\sigma_r^2 .
\label{eq:app-progress-pre}
\end{align}
Under $0<\eta\le 1/(4L)$, we have $\frac{3\eta}{4}-L\eta^2\ge\eta/2$ and $\eta+L\eta^2\le2\eta$. Applying the PL condition and $\|e^{(r)}\|^2\le2\|b^{(r)}\|^2+2\|d^{(r)}\|^2$ gives
\begin{align}
\mathbb E[F(w^{(r+1)})-F(w^*)]
&\le
\left(1-\frac{\eta\mu}{2}\right)\mathbb E[F(w^{(r)})-F(w^*)]
\notag\\
&\quad +4\eta\mathbb E\left[\|b^{(r)}\|^2+\|d^{(r)}\|^2\right]
+\frac{L\eta^2}{2}\sigma_r^2 .
\label{eq:app-one-step-final}
\end{align}
By Assumption~\ref{ass2} and $\sum_u(q_r(u)-\pi_u)=0$,
\begin{align}
\|b^{(r)}\|
&=\left\|\sum_u(q_r(u)-\pi_u)\big(\nabla F_u(w^{(r)})-\nabla F(w^{(r)})\big)\right\|
\notag\\
&\le C_g\sum_u|q_r(u)-\pi_u|\Delta_u
+\sum_u|q_r(u)-\pi_u|\chi_u(w^{(r)})
\notag\\
&\le C_g\bar B+\bar\chi .
\label{eq:app-bias}
\end{align}
Assumption~\ref{ass4} gives $\|d^{(r)}\|^2\le\bar D_E$, and~\eqref{eq:signma} gives
\begin{equation}
\sigma_r^2\le M\bar G_{\mathrm{agg}}^2+\bar\Sigma.
\label{eq:app-sigma}
\end{equation}
Taking total expectation in~\eqref{eq:app-one-step-final} and substituting~\eqref{eq:app-bias}--\eqref{eq:app-sigma}, we obtain
\begin{equation}
X_{r+1}
\le
\left(1-\frac{\eta\mu}{2}\right)X_r+\Psi,
\label{eq:app-master}
\end{equation}
where $X_r=\mathbb E[F(w^{(r)})-F(w^*)]$ and
\begin{equation}
\Psi=
4\eta\left((C_g\bar B+\bar\chi)^2+\bar D_E\right)
+\frac{L\eta^2}{2}\left(M\bar G_{\mathrm{agg}}^2+\bar\Sigma\right).
\end{equation}
Let $\varrho=\eta\mu/2$. Unrolling~\eqref{eq:app-master} gives
\begin{align}
X_R
&\le (1-\varrho)^RX_0
+\frac{1-(1-\varrho)^R}{\varrho}\Psi
\notag\\
&\le (1-\varrho)^RX_0+\frac{\Psi}{\varrho}.
\label{eq:app-unroll}
\end{align}
Since $E_\infty=\Psi/\varrho=2\Psi/(\eta\mu)$, this proves~\eqref{eq:finite-time}. Taking $R\to\infty$ yields~\eqref{eq:steady-state}. Finally, solving $(1-\varrho)^RX_0\le\varepsilon-E_\infty$ and using $(1-\varrho)^R\le e^{-\varrho R}$ gives the sufficient round bound~\eqref{eq:rounds}. This completes the proof.

\bibliographystyle{IEEEtran}
\bibliography{ref}

@IEEEtranBSTCTL{IEEEexample:BSTcontrol,
  CTLuse_article_number = "yes",
  CTLuse_paper = "yes",
  CTLuse_url = "yes",
  CTLuse_forced_etal = "no",
  CTLmax_names_forced_etal = "10",
  CTLnames_show_etal = "1"
}

@article{jimenez2025psi,
  title={{PSI-PFL}: Population Stability Index for Client Selection in {{N}on-{IID}} Personalized Federated Learning},
  author={Jimenez-Gutierrez, Daniel-M and Solans, David and Elbamby, Mohammed and Kourtellis, Nicolas},
  journal={arXiv preprint arXiv:2506.00440},
  year={2025}
}

@article{ren2025dynamic,
  title={Dynamic clustered federated learning via adaptive distribution similarity computation},
  author={Ren, Tian and Cheng, Siyao and Zhang, Hao and Liu, Jie},
  journal={Comput. Netw.},
  pages={111302},
  year={2025},
  publisher={Elsevier}
}

@INPROCEEDINGS{8761315,
  author={Nishio, Takayuki and Yonetani, Ryo},
  booktitle={Proc. IEEE Int. Conf. Commun.}, 
  title={Client Selection for Federated Learning with Heterogeneous Resources in Mobile Edge}, 
  year={2019},
  volume={},
  number={},
  pages={1-7},
  doi={10.1109/ICC.2019.8761315}}

@inproceedings{cho2022towards,
  title={Towards understanding biased client selection in federated learning},
  author={Cho, Yae Jee and Wang, Jianyu and Joshi, Gauri},
  booktitle={Proc. Int. Conf. Artif. Intell. Stat.},
  pages={10351--10375},
  year={2022},
  organization={PMLR}
}

@inproceedings{wang2025fedccs,
  title={Fed{C}cs: Federated learning cluster-based client delection algorithm for {N}on-{IID} data},
  author={Wang, Jingyuan},
  booktitle={Int. Conf. Neural Netw., Inf. Commun. Eng. (NNICE)},
  pages={135--139},
  year={2025},
  organization={IEEE}
}

@ARTICLE{9846900,
  author={Shi, Fang and Hu, Chunchao and Lin, Weiwei and Fan, Lisheng and Huang, Tiansheng and Wu, Wentai},
  journal={IEEE Internet Things J.}, 
  title={{VF}ed{CS}: Optimizing Client Selection for Volatile Federated Learning}, 
  year={2022},
  volume={9},
  number={24},
  pages={24995-25010},
  doi={10.1109/JIOT.2022.3195073}}

@INPROCEEDINGS{9443523,
  author={Jee Cho, Yae and Gupta, Samarth and Joshi, Gauri and Yağan, Osman},
  booktitle={Asilomar Conf. Signals, Syst., Comput.}, 
  title={Bandit-based communication-efficient client selection strategies for federated learning}, 
  year={2020},
  volume={},
  number={},
  pages={1066-1069},
  doi={10.1109/IEEECONF51394.2020.9443523}}

@INPROCEEDINGS{10074237,
  author={Arisdakessian, Sarhad and Wahab, Omar Abdel and Mourad, Azzam and Otrok, Hadi},
  booktitle={Int. Conf. Comput., Netw., Commun.}, 
  title={Towards Instant Clustering Approach for Federated Learning Client Selection}, 
  year={2023},
  volume={},
  number={},
  pages={409-413},
  doi={10.1109/ICNC57223.2023.10074237}}

@ARTICLE{10197242,
  author={Huang, Honglan and Shi, Wei and Feng, Yanghe and Niu, Chaoyue and Cheng, Guangquan and Huang, Jincai and Liu, Zhong},
  journal={IEEE Trans. Neural Netw. Learn. Syst.}, 
  title={Active Client Selection for Clustered Federated Learning}, 
  year={2024},
  volume={35},
  number={11},
  pages={16424-16438},
  doi={10.1109/TNNLS.2023.3294295}}

@article{datta2024blockchain,
  title={Blockchain-based smart contract model for securing healthcare transactions by using consumer electronics and mobile-edge computing},
  author={Datta, Sagnik and Namasudra, Suyel},
  journal={IEEE Trans. Consum. Electron.},
  volume={70},
  number={1},
  pages={4026--4036},
  year={2024},
  publisher={IEEE}
}

@ARTICLE{10195234,
  author={He, Yejun and Yang, Mengna and He, Zhou and Guizani, Mohsen},
  journal={IEEE Trans. Cogn. Commun. Netw.}, 
  title={Computation Offloading and Resource Allocation Based on {DT-MEC}-Assisted Federated Learning Framework}, 
  year={2023},
  volume={9},
  number={6},
  pages={1707-1720},
  doi={10.1109/TCCN.2023.3298926}}

@INPROCEEDINGS{9685698,
  author={Cheng, Zhipeng and Min, Minghui and Liwang, Minghui and Gao, Zhibin and Huang, Lianfen},
  booktitle={2021 IEEE Global Commun. Conf.}, 
  title={Joint Client Selection and Task Assignment for Multi-Task Federated Learning in {MEC} Networks}, 
  year={2021},
  volume={},
  number={},
  pages={1-6},
  doi={10.1109/GLOBECOM46510.2021.9685698}}

@INPROCEEDINGS{9498853,
  author={Zheng, Jingjing and Li, Kai and Tovar, Eduardo and Guizani, Mohsen},
  booktitle={2021 International Wirel. Commun. Mob. Com. (IWCMC)}, 
  title={Federated Learning for Energy-balanced Client Selection in Mobile Edge Computing}, 
  year={2021},
  volume={},
  number={},
  pages={1942-1947},
  doi={10.1109/IWCMC51323.2021.9498853}}

@ARTICLE{10713971,
  author={Tong, Zhao and Deng, Jiaxin and Mei, Jing and Zhang, Yuanyang and Li, Keqin},
  journal={IEEE Trans. Serv. Comput.}, 
  title={Multi-Objective {DAG} Task Offloading in {MEC} Environment Based on Federated {DQN} With Automated Hyperparameter Optimization}, 
  year={2024},
  volume={17},
  number={6},
  pages={3999-4012},
  doi={10.1109/TSC.2024.3478841}}

@ARTICLE{11223124,
  author={Gu, Ke and Lei, Jiaqi and Tan, Jingjing and Li, Xiong},
  journal={IEEE Transactions on Network and Service Management}, 
  title={A Verifiable Federated Learning Scheme With Privacy-Preserving in MCS}, 
  year={2026},
  volume={23},
  number={},
  pages={862-879},
  doi={10.1109/TNSM.2025.3627581}}

@ARTICLE{10206024,
  author={Li, Hui and Li, Xiuhua and Fan, Qilin and Xiong, Qingyu and Wang, Xiaofei and Leung, Victor C. M.},
  journal={IEEE Transactions on Network and Service Management}, 
  title={Transfer Learning for Real-Time Surface Defect Detection With Multi-Access Edge-Cloud Computing Networks}, 
  year={2024},
  volume={21},
  number={1},
  pages={310-323},
  doi={10.1109/TNSM.2023.3301718}}

@ARTICLE{qi2025energy,
  author={Qi, Hang and Luo, Jieping and Xu, Zimu and Li, Qiyue and Yin, Jiaying and Wu, Jingjin},
  journal={IEEE Commun. Lett.}, 
  title={Energy Efficient Power Control for Over-the-Air Federated Learning in Satellite Communications}, 
  year={2025},
  volume={29},
  number={7},
  pages={1530-1534}
}

@article{qi2025comparative,
  author    = {Hang Qi and Jieping Luo and Qiyue Li and Jingjin Wu},
  title     = {A comparative trade-off analysis on accuracy and efficiency for federated learning in demand forecasting},
  journal   = {Appl. Soft Comput.},
  volume    = {182},
  year      = {2025},
  pages     = {113561},
  issn      = {1568-4946},
  doi       = {10.1016/j.asoc.2025.113561},
}

@article{yang2019federated,
  title={Federated machine learning: Concept and applications},
  author={Yang, Qiang and Liu, Yang and Chen, Tianjian and Tong, Yongxin},
  journal={ACM Trans. Intell. Syst. Technol.},
  volume={10},
  number={2},
  pages={1--19},
  year={2019},
  publisher={ACM New York, NY, USA}
}

@inproceedings{lai2021oort,
  title={Oort: Efficient federated learning via guided participant selection},
  author={Lai, Fan and Zhu, Xiangfeng and Madhyastha, Harsha V and Chowdhury, Mosharaf},
  booktitle={15th USENIX Symp. Oper. Syst. Des. Implement.},
  pages={19--35},
  year={2021}
}

@ARTICLE{10468591,
  author={Lu, Zili and Pan, Heng and Dai, Yueyue and Si, Xueming and Zhang, Yan},
  journal={IEEE Internet Things J.}, 
  title={Federated Learning With {N}on-{IID} Data: A Survey}, 
  year={2024},
  volume={11},
  number={11},
  pages={19188-19209},
  doi={10.1109/JIOT.2024.3376548}}

@article{liu2023dependent,
  title={Dependent task scheduling and offloading for minimizing deadline violation ratio in mobile edge computing networks},
  author={Liu, Shumei and Yu, Yao and Lian, Xiao and Feng, Yuze and She, Changyang and Yeoh, Phee Lep and Guo, Lei and Vucetic, Branka and Li, Yonghui},
  journal={IEEE J. Sel. Areas Commun. },
  volume={41},
  number={2},
  pages={538--554},
  year={2023},
  publisher={IEEE}
}

@inproceedings{zhou2022efficient,
  title={Efficient device scheduling with multi-job federated learning},
  author={Zhou, Chendi and Liu, Ji and Jia, Juncheng and Zhou, Jingbo and Zhou, Yang and Dai, Huaiyu and Dou, Dejing},
  booktitle={Proc. of the AAAI Conf. Artif. Intell.},
  volume={36},
  number={9},
  pages={9971--9979},
  year={2022}
}

@article{rubner2000earth,
  title={The {Earth Mover's Distance} as a metric for image retrieval},
  author={Rubner, Yossi and Tomasi, Carlo and Guibas, Leonidas J},
  journal={Int. J. Comput. Vis},
  volume={40},
  pages={99--121},
  year={2000},
  publisher={Springer}
}

@ARTICLE{zhao2021ServiceCaching,
  author={Zhao, Gongming and Xu, Hongli and Zhao, Yangming and Qiao, Chunming and Huang, Liusheng},
  journal={IEEE Trans. Parallel Distrib. Syst.}, 
  title={Offloading Tasks With Dependency and Service Caching in Mobile Edge Computing}, 
  year={2021},
  volume={32},
  number={11},
  pages={2777-2792},
 }

@ARTICLE{dai2023offloading,
  author={Dai, Xingxia and Xiao, Zhu and Jiang, Hongbo and Lei, Ming and Min, Geyong and Liu, Jiangchuan and Dustdar, Schahram},
  journal={IEEE Trans. Serv. Comput.}, 
  title={Offloading Dependent Tasks in Edge Computing With Unknown System-Side Information}, 
  year={2023},
  volume={16},
  number={6},
  pages={4345-4359},
  }

@inproceedings{chai2020tifl,
  title={Tifl: A tier-based federated learning system},
  author={Chai, Zheng and Ali, Ahsan and Zawad, Syed and Truex, Stacey and Anwar, Ali and Baracaldo, Nathalie and Zhou, Yi and Ludwig, Heiko and Yan, Feng and Cheng, Yue},
  booktitle={Proc. 29th Int. Symp. High-Perform. Parallel Distrib. Comput.},
  pages={125--136},
  year={2020}
}

@inproceedings{li2024adafl,
  title={Adafl: Adaptive client selection and dynamic contribution evaluation for efficient federated learning},
  author={Li, Qingming and Li, Xiaohang and Zhou, Li and Yan, Xiaoran},
  booktitle={2024 IEEE Int. Conf. Acoust., Speech, Signal Process.},
  pages={6645--6649},
  year={2024},
  organization={IEEE}
}

@article{zhang2022deepemd,
  title={Deepemd: Differentiable {Earth Mover's Distance} for few-shot learning},
  author={Zhang, Chi and Cai, Yujun and Lin, Guosheng and Shen, Chunhua},
  journal={IEEE Trans. Pattern Anal. Mach. Intell.},
  volume={45},
  number={5},
  pages={5632--5648},
  year={2022},
  publisher={IEEE}
}

@inproceedings{luo2025cluster,
  title={Cluster-Based Client Selection for Dependent Multi-Task Federated Learning in Edge Computing},
  author={Luo, Jieping and Li, Qiyue and Liu, Zhizhang and Qi, Hang and Yin, Jiaying and Wu, Jingjin},
  booktitle={Proc. IEEE GlobeCom 2025 Wkshps},
  year={2025}
}

@ARTICLE{10018536,
  author={Zhang, Yu and Liu, Duo and Duan, Moming and Li, Li and Chen, Xianzhang and Ren, Ao and Tan, Yujuan and Wang, Chengliang},
  journal={IEEE Trans. Parallel Distrib. Syst.}, 
  title={Fed{MDS}: An Efficient Model Discrepancy-Aware Semi-Asynchronous Clustered Federated Learning Framework}, 
  year={2023},
  volume={34},
  number={3},
  pages={1007-1019},
  doi={10.1109/TPDS.2023.3237752}}

@inproceedings{alekseenko2024distance,
  title={Distance-aware {N}on-{IID} federated learning for generalization and personalization in medical imaging segmentation},
  author={Alekseenko, Julia and Karargyris, Alexandros and Padoy, Nicolas},
  booktitle={Med. Imaging Deep Learn.},
  year={2024}
}

@article{chen2022emd,
  title={An emd-based adaptive client selection algorithm for federated learning in heterogeneous data scenarios},
  author={Chen, Aiguo and Fu, Yang and Sha, Zexin and Lu, Guoming},
  journal={Front. Plant. Sci.},
  volume={13},
  pages={908814},
  year={2022},
  publisher={Frontiers Media SA}
}

@inproceedings{shen2018wasserstein,
  title={Wasserstein distance guided representation learning for domain adaptation},
  author={Shen, Jian and Qu, Yanru and Zhang, Weinan and Yu, Yong},
  booktitle={Proc. AAAI Conf. Artif. Intell.},
  volume={32},
  number={1},
  year={2018}
}

@INPROCEEDINGS{9797945,
  author={Lin, Daoqin and Guo, Yuchun and Sun, Huan and Chen, Yishuai},
  booktitle={Proc. IEEE INFOCOM - IEEE Conf. Comput. Commun. Workshops}, 
  title={FedCluster: A Federated Learning Framework for Cross-Device Private {ECG} Classification}, 
  year={2022},
  volume={},
  number={},
  pages={1-6},
  doi={10.1109/INFOCOMWKSHPS54753.2022.9797945}}

@article{luo2023influence,
  title={Influence of data distribution on federated learning performance in tumor segmentation},
  author={Luo, Guibo and Liu, Tianyu and Lu, Jinghui and Chen, Xin and Yu, Lequan and Wu, Jian and Chen, Danny Z and Cai, Wenli},
  journal={Radiol. Artif. Intell.},
  volume={5},
  number={3},
  pages={e220082},
  year={2023},
  publisher={Radiological Society of North America}
}

@inproceedings{anguita2013public,
  title={A Public Domain Dataset for Human Activity Recognition Using Smartphones},
  author={Anguita, Davide and Ghio, Alessandro and Oneto, Luca and Parra, Xavier and Reyes-Ortiz, Jorge L.},
  booktitle={Proc. 21st Eur. Symp. Artificial Neural Networks, Computational Intelligence and Machine Learning},
  pages={437--442},
  year={2013}
}

@article{yang2023medmnist,
  title={{MedMNIST} v2: A Large-Scale Lightweight Benchmark for 2D and 3D Biomedical Image Classification},
  author={Yang, Jiancheng and Shi, Rui and Wei, Donglai and Liu, Zequan and Zhao, Lin and Ke, Bilian and Pfister, Hanspeter and Ni, Bingbing},
  journal={Scientific Data},
  volume={10},
  number={1},
  pages={41},
  year={2023},
  publisher={Nature Publishing Group},
  doi={10.1038/s41597-022-01721-8}
}

@article{kermany2018identifying,
  title={Identifying Medical Diagnoses and Treatable Diseases by Image-Based Deep Learning},
  author={Kermany, Daniel S. and others},
  journal={Cell},
  volume={172},
  number={5},
  pages={1122--1131.e9},
  year={2018},
  doi={10.1016/j.cell.2018.02.010}
}

\end{document}